\documentclass[11pt]{article}

\usepackage{geometry}
\usepackage{amsfonts}
\usepackage{graphicx}
\usepackage{epstopdf}
\usepackage{enumitem}
\usepackage[linesnumbered,ruled,vlined]{algorithm2e}
\usepackage{algorithmic}
\usepackage{amsmath,amsfonts,amssymb,mathrsfs}
\usepackage{amsthm}
\usepackage{subcaption}
\ifpdf
  \DeclareGraphicsExtensions{.eps,.pdf,.png,.jpg}
\else
  \DeclareGraphicsExtensions{.eps}
\fi

\usepackage{amsopn}
\usepackage{mathtools}

\theoremstyle{definition}
\newtheorem{definition}{Definition}[section]
\newtheorem{theorem}[definition]{Theorem}
\newtheorem{proposition}[definition]{Proposition}
\newtheorem{lemma}[definition]{Lemma}
\newtheorem{corollary}[definition]{Corollary}

\newtheorem*{claim}{Claim}

\DontPrintSemicolon
\DeclareMathOperator*{\argmin}{argmin}
\SetKwInput{KwInput}{Input} 
\SetKwInput{KwOutput}{Output}

\begin{document}

\title{An Algorithm for Linear Parametric Minimum Cycle Mean Problem}
\author{Yuki Nishida\thanks{Kyoto Prefectural University (E-mail: y-nishida@kpu.ac.jp)}    }

\date{}

\maketitle

\begin{abstract}
The minimum cycle mean problem (MCM) on weighted digraphs is the problem of finding the minimum value of the cycle mean, that is, the ratio of the cost to the length, over all cycles.
Despite its wide range of applications to discrete event systems, the parametric counterpart of the MCM has received relatively little attention in the literature, unlike other parametric problems in network optimization.
In this paper, we consider the linear parametric MCM, where all edges $e$ have cost $a(e)-b(e)t$ with parameter $t$.
We propose an algorithm to solve the linear parametric MCM in $O((m+n\log n)n^2W)$ time, where $n$ is the number of vertices, $m$ is the number of edges, and $W$ is the maximum absolute value of the coefficients $b(e) \in \mathbb{Z}$.
The central technique of the proposed method is the algorithm for the parametric shortest path problem.
The MCM is closely related to spectral theory in the tropical semiring, where the ``$\min$'' operation is regarded as addition and ``$+$'' as multiplication.
By exploiting the connection between them, we provide a method to compute the eigenvalues and eigenvectors of tropical parametric matrices.
\end{abstract}

\section{Introduction}

The minimum cycle mean problem (MCM) on a weighted digraph is the problem of finding the minimum value of the cycle mean, that is, the ratio of the cost to the length, over all cycles.
Karp~\cite {Karp1978} established a fundamental characterization of the minimum cycle mean. Since then, the MCM has been studied extensively because of its close connections to several areas of network optimization such as the negative cycle problem and the minimum cost flow problem.
The MCM also plays a crucial role in the analysis of discrete event systems, such as manufacturing~\cite{Cohen1985}, railway timetable~\cite{Heidergott2005}, and synchronous dataflow~\cite{Lee1987}, because the best performance of cyclic scheduling is given by the minimum cycle mean.

Linear parametric optimization is a class of optimization problems in which the objective function depends linearly on scalar parameters.
A central question in this setting is to understand the structure of the optimal value function and to determine how the optimal solution changes as the parameter varies.
Parametric optimization problems arise naturally in a wide range of applications, including sensitivity analysis~\cite{Gal1994}, process synthesis under uncertainty~\cite{Banerjee2003}, and model-predictive control~\cite{Pistikopoulos2012}. 
Parametric counterparts of fundamental network optimization problems, such as the shortest path problem, the assignment problem, and the minimum-cost flow problem, have also been studied~\cite{Nemesch2025}. 
Linear parametric optimization is also regarded as a multi-objective optimization because the parameter values serve as weights that determine the relative importance of the individual objective functions~\cite{Ehrgott2005}. 

Bringing these two lines of research together, we consider the parametric minimum cycle mean problem (parametric MCM). 
We deal with the case where there is only one linear parameter $t$, that is, the cost of each edge $e$ is of the form $a(e) - b(e)t$.  
To the best of our knowledge, the parametric MCM has received little attention even for this restricted setting.
Here, we define our problem more formally.
Let $\mathcal{G}=(V,E)$ be a digraph with $|V|=n$ and $|E|=m$.
Given two functions $a,b: E \to \mathbb{Z}$, define the parametric cost function $c(t): E \to \mathbb{R}$ by $c(t;e) = a(e)-b(e)t$ for all $e \in E$ and $t \in \mathbb{R}$.
By identifying a cycle $\mathcal{C}$ with the set of its edges, the cost of $\mathcal{C}$ is $c(t;\mathcal{C}):= \sum_{e\in \mathcal{C}} c(t;e)$ and the cycle mean of $\mathcal{C}$ is $\gamma(t;\mathcal{C}):= c(t;\mathcal{C})/|\mathcal{C}|$.
The problem is to find the breakpoints $t_0,t_1,\dots,t_N$, where $t_0=-\infty$ and $t_N = \infty$, and cycles $\mathcal{C}_1,\mathcal{C}_2,\dots,\mathcal{C}_N$ such that $\mathcal{C}_k$ minimizes $\gamma(t;\mathcal{C})$ over all cycles $\mathcal{C}$ if $t_{k-1} \leq t \leq t_k$ for $k = 1,2,\dots,N$.

\subsection{Our results}

Our main contribution in this paper is to devise an algorithm to solve the parametric MCM in $O((m+n\log n)n^2W)$ time, where $W:= \max_{e \in E} |b(e)|$.
We exploit the algorithm for the parametric shortest path problem by Karp and Orlin~\cite{Karp1981} and its improvement by Young et al.~\cite{Young1991} as a subroutine of the proposed algorithm.
Their algorithms update the shortest path tree incrementally as the parameter $t$ increases until a zero-cost cycle is detected.
Once a zero-cost cycle $\mathcal{C}$ is detected, the iteration terminates because a further increase of $t$ causes a negative-cost cycle, contradicting the existence of the shortest path tree.
Our idea for solving the parametric MCM is to continue the update of the shortest path tree by keeping the cost of $\mathcal{C}$ equal to zero.
To do that, we modify the parametric cost of all edges by subtracting $\gamma:= \alpha - \beta t$, which is the cycle mean of $\mathcal{C}$.
This modification prevents the cost of $\mathcal{C}$ from being negative, enabling us to search for a different cycle as $t$ increases.
By iteratively modifying the cost function each time a zero-cost cycle is found, the algorithm eventually obtains all breakpoints and corresponding minimum mean cycles.

A key observation for the complexity analysis is the number of breakpoints and the number of changes in the shortest path with respect to the modified cost function.
The former is $O(n^2W)$ because coefficients of $t$ in $\gamma(t;\mathcal{C})$ can take $O(n^2W)$ different values.
The latter is also $O(n^2W)$ for each pair of the starting and ending vertices of paths.
This is the most critical aspect of the analysis of the algorithm because the cost function used for the computation of shortest paths is dynamically updated during the algorithm.

With applications to scheduling of discrete event systems in mind, it is important to determine not only the minimum cycle mean but also the actual cyclic schedule of each event associated with the optimal cycle.
This has been well developed in the framework of the tropical semiring, which is also called the min-plus or max-plus algebra~\cite{Baccelli1992,Heidergott2005}.
The tropical semiring is a semiring of numbers $\mathbb{R} \cup \{\infty\}$ in which the ``$\min$'' operation is regarded as addition and ``$+$'' as multiplication.
In terms of the tropical semiring, the minimum cycle mean and the associated cyclic schedule are the minimum eigenvalue and its corresponding eigenvector of a square matrix, respectively.
It is well-known that the tropical eigenvector can be obtained by solving the shortest path problem~\cite{Green1979}.
Similarly to the tropical supereigenvector computation in~\cite{Nishida2026}, we utilize this result for the parametric case; the parametric eigenvectors are obtained successively as the shortest path tree is updated.

Notably, Karp's method~\cite{Karp1978} for the non-parametric MCM can be applied to the parametric MCM via symbolic computation over the tropical semiring using tropical polynomials in $t$.
The time complexity of this computation is $O(mn^2W)$.
This seems better than our proposed algorithm when $m$ is much smaller than $n \log n$.
For the non-parametric MCM, however, Young et al.~\cite{Young1991} remark that Karp's method always requires $O(mn)$ time of computation, whereas the time complexity $O(mn+n^2\log n)$ based on the parametric shortest path algorithm is only a worst-case bound, and suggest that the expected time complexity is $O(m+n\log n)$.
In addition, an experimental result in~\cite{Dasdan2004} shows that the parametric shortest path algorithm in~\cite{Young1991} is the fastest among six algorithms for solving the non-parametric MCM.
Hence, our proposed algorithm for the parametric MCM could be expected to offer strong theoretical guarantees and good practical performance.

\subsection{Related works}

Algorithms for parametric optimization on networks have been developed for several problems.
Karp and Orlin~\cite{Karp1981} presented an algorithm to solve the single-parametric shortest path problem in $O(mn \log n)$ time if the coefficients are $b(e) \in \{0,1\}$ for all $e \in E$.
Young et al.~\cite{Young1991} improved their algorithm to $O(mn + n^2\log n)$ time by using the Fibonacci heap data structure~\cite{Fredman1987}. They also state that the time complexity is $O((m + n\log n)nW)$ if $b(e) \in \{0,\pm 1, \dots, \pm W\}$.
Gassner and Klinz~\cite{Gassner2010} proposed an algorithm to solve the parametric assignment problem in $O((m + n\log n)nW)$ time by exploiting the parametric shortest path algorithm.
For the parametric minimum cost flow problem, Raith and Sede\~{n}o-Noda~\cite{Raith2017} presented an $O((m + n\log n)nB)$ time algorithm, where $B$ is the number of breakpoints.
Recently, the doubling constant parametrization for weighted network optimization problems has been investigated to evaluate the complexity of algorithms~\cite{Stoian2026}.

Compared with these problems, studies on the parametric MCM are quite limited.
In the context of the tropical eigenvalue problem, Pl\'{a}vka~\cite{Plavka2005} addressed a particularly special case of the parametric MCM in which the coefficients of parameters are $b(e) = 1$ for all edges $e$ leaving a fixed subset of vertices.
The mean-payoff game, which is a two-player game known to be in $\textrm{NP} \cap \textrm{co-NP}$, can be considered as a generalization of the MCM because it is equivalent to finding an eigenvalue of a min-max-linear map~\cite{Gaubert1998}.
Several studies have considered the parametric mean-payoff game as a tool for solving tropical optimization problems~\cite{Gaubert2012,Parsons2023}.
The parametric mean-payoff game has also been addressed as a reduction of the tropical generalized eigenvalue problem and solved using a naive approach that exhaustively solves the non-parametric problem over all candidate rational points~\cite{Gaubert2013}.
These studies consider parametric mean-payoff games only under specific settings corresponding to tropical optimization problem or generalized eigenvalue problem, and they do not completely contain the parametric MCM in this paper.
Parametric problems related to the MCM are also discussed in the literature on synchronous dataflow models~\cite{Bhattacharya2000}.
Ghamarian et al.~\cite{Ghamarian2008} considered the multi-parametric problem and applied a divide-and-conquer method to the enumeration of all convex regions on which the minimum cycle mean is linear.
Skelin et al.~\cite{Skelin2017} also considered the multi-parametric problem and analyzed the worst-case performance as the parameters vary over given ranges.
Similar approaches are found in the eigenvalue problem over tropical interval matrices~\cite{Cechlarova2005,Yin2022}.
In view of these circumstances, developing an efficient algorithm for the general single-parametric MCM appears to be a new challenge.

If we view the parametric MCM simply as a single-parametric optimization problem, general-purpose methods can be applied.
The Eisner-Severance method~\cite{Eisner1976} repeatedly solves the non-parametric problem to successively refine the piecewise-linear approximation of the parametric objective value.
When applying this method to the parametric MCM, the overall time complexity is $O(mn^3W)$ because the non-parametric MCM can be solved in $O(mn)$ time and the number of breakpoints is $O(n^2W)$, as noted previously.
Gusfield~\cite{Gusfield1983} proposed an algorithm based on Megiddo's method~\cite{Megiddo1979} for the single parametric search in combinatorial optimization.
Although this algorithm has a higher overall computational cost than the Eisner-Severance method, it has the advantage of being able to find the breakpoints one by one in polynomial time.

\subsection{Organization}

The rest of this paper is organized as follows.
In Section 2, we first introduce fundamental notations on graphs and then present the problem in consideration, the parametric MCM.
We also describe the algorithm for the parametric shortest path problem by Karp and Orlin~\cite{Karp1981} and Young et al.~ \ cite {Young1991}.
In Section 3, we present our main algorithm solving the parametric MCM and theoretical analysis of it.
In Section 4, we briefly introduce the tropical semiring and then present an application of our algorithm to computing the tropical eigenvectors of a parametric matrix.
In Section 5, we compare our proposed algorithm to the symbolic variant of Karp's method~\cite{Karp1978}.
Finally, we conclude the paper with some remarks in Section 6.

\section{Preliminaries}

In this paper, let $\mathbb{Z}_{> 0}$ and $\mathbb{Z}_{\geq 0}$ denote the sets of positive and nonnegative integers, respectively.
For $n \in \mathbb{Z}_{> 0}$, let $[n]$ denote the set $\{1,2,\dots,n\}$.
For $a, b \in \mathbb{R} \cup \{-\infty,\infty\}$ with $a \leq b$, the symbols $[a,b]$ and $[a,b)$ denote the closed and half-open intervals between $a$ and $b$, that is, $[a,b] = \{ t \in \mathbb{R} \mid a \leq t \leq b\}$ and $[a,b) = \{ t \in \mathbb{R} \mid a \leq t < b\}$.

Let $\mathcal{G} = (V, E)$ be a digraph with the vertex set $V$ and edge set $E$.
We allow self-loops on vertices and parallel edges.
For an edge $e \in E$, let $e^-$ and $e^+$ denote the tail and head of $e$, respectively.
For a vertex $v \in V$, let $\delta^+(v)$ and $\delta^-(v)$ denote the set of edges leaving $v$ and entering $v$, respectively, that is, $\delta^+(v) = \{e \in E \mid e^- = v\}$ and $\delta^-(v) = \{e \in E \mid e^+ = v\}$.
The outdegree and indegree of $v \in V$ are defined by $\deg^+(v) := |\delta^+(v)|$ and $\deg^-(v) := |\delta^-(v)|$, respectively.

A walk in a digraph $\mathcal{G} = (V,E)$ is an alternating sequence of vertices and edges $(v_0,e_1,v_1,e_2,\dots,e_\ell,v_\ell)$, where $\ell \in \mathbb{Z}_{\geq 0}$, $v_0,v_1,\dots,v_\ell \in V$, and $e_1,e_2,\dots,e_\ell \in E$, such that $v_{k-1} = e_k^-$ and $v_{k} = e_k^+$ for $k=1,2,\dots,\ell$.
The vertices $v_0$ and $v_\ell$ are called the starting and ending vertices of the walk, respectively.
A walk starting with $v_0$ and ending with $v_\ell$ is, for short, called a $v_0$-$v_\ell$ walk.
A digraph is said to be strongly connected if there exists a $u$-$v$ walk for any vertices $u$ and $v$.
A walk $(v_0,e_1,v_1,\dots,v_\ell)$ is called a path if $v_k \neq v_{k'}$ for $0 \leq k < k' \leq \ell$ and called a cycle if $v_0 = v_\ell$ and $v_k \neq v_{k'}$ for all pairs $(k,k') \neq (0,\ell)$ with $k < k'$.
We assume that a sequence $(v_0)$ is a $v_0$-$v_0$ path, while $(v_0,e_0,v_0)$ is a cycle defined by the self-loop $e_0$ on $v_0$.
We will identify a path or cycle $(v_0,e_1,v_1,\dots,v_\ell)$ with its edge set $\{e_1,e_2,\dots,e_\ell\}$.

Fix any vertex $r \in V$. An $r$-rooted tree is a digraph $\mathcal{T}$ such that $\deg^+(v) = 1$ for all $v \in V \setminus \{r\}$ and it has no cycles.
We identify a tree with its edge set.
In an $r$-rooted tree $\mathcal{T}$, there exists a unique $v$-$r$ path for each $v \in V$.
This path is denoted as $\mathcal{T}_v$.
For $u,v \in V$, the unique $u$-$v$ path in $\mathcal{T}$ is denoted as $\mathcal{T}_{[u,v]}$ if it exists.
For $v \in V$, the subgraph of $\mathcal{T}$ induced by the vertex set $\{ u \in V \mid \text{there exists $u$-$v$ path in $\mathcal{T}$}\}$ is denoted as $\mathcal{T}_{\to v}$.

\subsection{Parametric minimum cycle mean problems}

Let us consider a digraph $\mathcal{G} = (V, E)$ and a cost function $c: E \to \mathbb{R}$.
The cost of a path (or cycle) $\mathcal{P}$ is defined by $c(\mathcal{P}) := \sum_{e \in \mathcal{P}} c(e)$.
More generally, for any function $f: E \to \mathbb{R}$ and subset $E' \subset E$, we define $f(E') := \sum_{e \in E'} f(e)$.
The length of a path (or cycle) $\mathcal{P}$ is defined by $|\mathcal{P}|$, that is, the number of edges contained in the path (or cycle).
The cycle mean of a cycle $\mathcal{C}$ is then defined by $\gamma(\mathcal{C}) := c(\mathcal{C})/|\mathcal{C}|$.
The minimum cycle mean problem (MCM) we focus on in this paper is to find the minimum value of  $\gamma(\mathcal{C})$ over all cycles $\mathcal{C}$ in the given digraph $\mathcal{G} = (V, E)$, as well as the cycle attaining the minimum.
The following formula by Karp~\cite{Karp1978} computes the minimum cycle mean $\gamma$ in $O(mn)$ time:
\begin{align}
	\gamma = \min_{v \in V} \max_{\ell=0,1,\dots,n-1} \frac{F_v(n) - F_v(\ell)}{n-\ell},		\label{eq:mcmkarp}
\end{align}
where $F_v(\ell)$ is the minimum cost of all walks from $v$ to a fixed vertex $r \in V$ with length $\ell$.

Given two functions $a, b: E \to \mathbb{Z}$, we consider a linear parametric cost function $c = a-bt$ defined by $c(t;e):= a(e) - b(e)t$ for $e \in E$.
Here, $t$ is a parameter that can take any real value.
When specifying the value of $t$, we write the cost function as $c(t)$.
The parametric MCM is the problem of finding the breakpoints $t_1,t_2,\dots,t_{N-1}$ and cycles $\mathcal{C}_1,\mathcal{C}_2,\dots,\mathcal{C}_N$ such that 
\begin{align*}
	\gamma(t) := \min_{\mathcal{C}: \text{ cycle}} \frac{c(t;\mathcal{C})}{|\mathcal{C}|}
	=\frac{c(t;\mathcal{C}_k)}{|\mathcal{C}_k|}  \quad \text{if } t \in [t_{k-1},t_k],
\end{align*}
for $k \in [N]$, where $t_0 = -\infty$ and $t_N = \infty$.

\subsection{Parametric shortest path algorithms}

Let us consider a digraph $\mathcal{G} = (V, E)$ and a cost function $c: E \to \mathbb{R}$.
For $u,v \in V$, a $u$-$v$ path is called the shortest $u$-$v$ path if it has the minimum cost among all $u$-$v$ walks. Note that the minimum is always attained by a path if $\mathcal{G}$ does not have any negative-cost cycles; otherwise, the minimum does not exist.
An $r$-rooted tree consisting of the shortest $v$-$r$ path for all $v \in V$ is called the shortest path tree.
To simplify the discussion in Section 4, we consider the single-sink shortest path problem instead of the single-source one.

Now, we again introduce two functions $a, b: E \to \mathbb{Z}$ which define the parametric cost function $c= a -bt$.
The parametric (single-sink) shortest path problem is the problem to find breakpoints $t_0,t_1,\dots,t_{N}$ and $r$-rooted trees $\mathcal{T}_1,\mathcal{T}_2,\dots,\mathcal{T}_N$ with a fixed vertex $r \in V$ such that, for $k \in [N]$, the tree $\mathcal{T}_k$ is the shortest path tree with respect to the cost function $c(t)$ for all $t \in [t_{k-1},t_k]$.

Here, we describe the parametric shortest path algorithm by Karp and Orlin~\cite{Karp1981} and its improvement by Young et al.~\cite{Young1991}, on which our algorithm for the parametric MCM is based.
Suppose that the $r$-rooted shortest path tree $\mathcal{T}$ with respect to $c(t)$ is given for some $t \in \mathbb{R}$.
We investigate when the shortest path tree changes as $t$ increases.
For $v \in V$, consider any $v$-$r$ path $\mathcal{P}$ different from $\mathcal{T}_v$.
If $b(\mathcal{P}) \leq b(\mathcal{T}_v)$, then the cost $c(t;\mathcal{P})$ decreases no more than $c(t;\mathcal{T}_v)$ as $t$ increases.
Hence, $\mathcal{P}$ will never become the shortest $v$-$r$ path.
On the other hand, if $b(\mathcal{P}) > b(\mathcal{T}_v)$, then there exists $t' > t$ such that $c(t';\mathcal{P}) = c(t';\mathcal{T}_v)$.
When finding the smallest value of such $t'$ over all paths, we may only consider the paths of the form $\mathcal{P} = \{e\} \cup \mathcal{T}_{e^+}$ for $e \in \delta^+(v)$.
The value $t'$ such that $c(t;\{e\} \cup \mathcal{T}_{e^+}) = c(t;\mathcal{T}_v)$ is computed as
\begin{align*}
	t' = \frac{a(e) + a(\mathcal{T}_{e^+}) - a(\mathcal{T}_v)}{b(e) + b(\mathcal{T}_{e^+}) - b(\mathcal{T}_v)}.
\end{align*}
Combining the above two cases, the key $\kappa_{c,\mathcal{T}}(e)$ of $e \in E$ with respect to the parametric cost function $c = a-bt$ and tree $\mathcal{T}$ is defined by
\begin{align*}
	\kappa_{c,\mathcal{T}}(e) := 
	\begin{cases} \dfrac{a(e) + a(\mathcal{T}_{e^+}) - a(\mathcal{T}_{e^-})}{b(e) + b(\mathcal{T}_{e^+}) - b(\mathcal{T}_{e^-})}
	& \text{if } b(e) + b(\mathcal{T}_{e^+}) > b(\mathcal{T}_{e^-}), \\
	\infty & \text{otherwise}. \end{cases}
\end{align*}
Then, $\min_{e \in E} \kappa_{c,\mathcal{T}}(e)$ is the minimum value of $t$ at which the shortest path tree changes.

\begin{proposition}[\cite{Karp1981}]		\label{prop:psp}
For any fixed value $t^* \in \mathbb{R}$, let $\mathcal{T}$ be the $r$-rooted shortest path tree with respect to $c(t^*)$ and $\kappa = \min_{e \in E} \kappa_{c,\mathcal{T}}(e)$.
Then, $\mathcal{T}$ is also the $r$-rooted shortest path tree with respect to $c(t)$ for all $t \in [t^*, \kappa]$.
\end{proposition}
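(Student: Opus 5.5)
The plan is to use the standard reduced-cost (potential) characterization of shortest path trees. For a fixed $t$ and the $r$-rooted tree $\mathcal{T}$, define the potential $d_t(v) := c(t;\mathcal{T}_v)$ for $v \in V$. The first step is to recall the well-known optimality criterion: $\mathcal{T}$ is a shortest path tree with respect to $c(t)$ if and only if the reduced costs satisfy
\begin{align*}
	c(t;e) + d_t(e^+) - d_t(e^-) \geq 0 \quad \text{for all } e \in E.
\end{align*}
Sufficiency is a telescoping argument. For any $v$-$r$ walk $\mathcal{W}$, summing the inequality along $\mathcal{W}$ gives $c(t;\mathcal{W}) \geq d_t(v) - d_t(r) = d_t(v)$, since $\mathcal{T}_r$ is the empty path. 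The same sum shows every cycle has nonnegative cost, so shortest paths exist and $\mathcal{T}_v$ attains the minimum. Necessity follows because $\{e\}\cup\mathcal{T}_{e^+}$ is an $e^-$-$r$ walk, so its cost is at least $d_t(e^-)$.

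Next, write the reduced cost of $e$ as an affine function of $t$:
\begin{align*}
	g_e(t) := \bigl(a(e) + a(\mathcal{T}_{e^+}) - a(\mathcal{T}_{e^-})\bigr) - \bigl(b(e) + b(\mathcal{T}_{e^+}) - b(\mathcal{T}_{e^-})\bigr)t = A_e - B_e t .
\end{align*}
Since $\mathcal{T}$ is a shortest path tree at $t^*$, the criterion gives $g_e(t^*) \geq 0$ for every $e$. The goal is to show $g_e(t) \geq 0$ for all $t \in [t^*,\kappa]$; applying the criterion at each such $t$ then finishes the proof. Note that tree edges have $g_e \equiv 0$, so they need no separate treatment.

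Split into cases on $B_e$.
\begin{itemize}
	\item If $B_e \leq 0$, then $g_e$ is nondecreasing in $t$, so $g_e(t) \geq g_e(t^*) \geq 0$ for all $t \geq t^*$. This is exactly the case $\kappa_{c,\mathcal{T}}(e)=\infty$.
	\item If $B_e > 0$, then $\kappa_{c,\mathcal{T}}(e) = A_e/B_e$ is the unique root of $g_e$, and $g_e(t) \geq 0$ exactly when $t \leq A_e/B_e$. For $t \leq \kappa \leq \kappa_{c,\mathcal{T}}(e)$ we therefore get $g_e(t) \geq 0$.
\end{itemize}
As a consistency check, $g_e(t^*)\ge 0$ forces $\kappa_{c,\mathcal{T}}(e) \geq t^*$, so $\kappa \ge t^*$ and the interval $[t^*,\kappa]$ is nonempty.

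The only delicate point is making sure the optimality criterion is applied correctly when negative cycles might exist. This is handled by the telescoping bound, which also rules out negative cycles: around a cycle the potentials cancel, so its cost equals the sum of its reduced costs, which is nonnegative. When $\kappa=\infty$, the interval is read as $[t^*,\infty)$ and the same argument applies.
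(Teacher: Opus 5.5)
Your proof is correct. The paper does not prove this proposition itself; it is cited from Karp and Orlin. The only justification the paper gives is the informal path-comparison argument preceding the statement. There, each $v$-$r$ path $\mathcal{P}\neq\mathcal{T}_v$ either has $b(\mathcal{P})\le b(\mathcal{T}_v)$ and never overtakes $\mathcal{T}_v$, or ties with it at some later $t'$. The first such crossing is then \emph{asserted}, without proof, to be detectable from the one-edge deviations $\{e\}\cup\mathcal{T}_{e^+}$.

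Your route replaces this global comparison of crossing times with a local, edge-by-edge certificate. With the potentials $d_t(v)=c(t;\mathcal{T}_v)$, each reduced cost is an affine function $A_e-B_e t$ whose root is exactly $\kappa_{c,\mathcal{T}}(e)$. The telescoping sum over an arbitrary walk is precisely the missing justification for the paper's claim that one-edge deviations suffice.

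Working with walks rather than paths also settles two points the sketch glosses over. First, $\{e\}\cup\mathcal{T}_{e^+}$ need not be a path: this happens when $e^-$ lies on $\mathcal{T}_{e^+}$, which is exactly the zero-cost-cycle situation the algorithm later exploits. Second, the absence of negative cycles on $[t^*,\kappa]$ comes for free, because potentials cancel around cycles.

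The necessity half of your criterion gives a further fact. When $\kappa<\infty$, the minimizing edge has $B_e>0$, so its reduced cost is negative for every $t>\kappa$, and $\mathcal{T}$ stops being optimal immediately after $\kappa$. This is the sense in which the paper treats $\kappa$ as the next breakpoint, although the proposition as stated only needs the direction you proved.
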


Let us consider the situation where $t$ is increased to $\kappa = \min_{e \in E} \kappa_{c,\mathcal{T}}(e)$ and $e \in E$ is the edge attaining the minimum key value.
Let $u = e^+$ and $v = e^-$. If $v$ is not on $\mathcal{T}_{u}$, then we replace the unique edge $e' \in \delta^+(v) \cap \mathcal{T}$ with $e$, obtaining the new shortest path tree $\mathcal{T}_{\mathrm{new}}$ with respect to $c(\kappa)$.
The key values are updated with respect to $\mathcal{T}_{\mathrm{new}}$.

On the other hand, if $v$ is on $\mathcal{T}_{u}$, then $\{e\} \cup \mathcal{T}_{[u,v]}$ becomes a zero-cost cycle with respect to $c(\kappa)$, which then has a negative cost with respect to $c(t)$ for $t > \kappa$.
Hence, there exists no shortest path tree with respect to $c(t)$ for $t > \kappa$, determining $t = \kappa$ as the final breakpoint.
The parametric shortest path algorithm by Karp and Orlin~\cite{Karp1981} is summarized as follows.
\begin{enumerate}
\item Take sufficiently small $t_0$. Find the shortest path tree $\mathcal{T}$ with respect to $c(t_0)$ and compute the corresponding keys $\kappa_{c,\mathcal{T}}(e)$ for all $e \in E$.
\item Find the minimum key value $\kappa$ as the next breakpoint. 
\item If a zero-cost cycle is detected, the algorithm terminates.
Otherwise, update the shortest path tree $\mathcal{T}$ and keys $\kappa_{c,\mathcal{T}}(e)$ for all $e \in E$ and repeat the procedure. 
\end{enumerate}

Young et al.~\cite{Young1991} improved the algorithm in~\cite{Karp1981} by introducing the vertex key
\begin{align*}
	\kappa_{c,\mathcal{T}}(v) := \min_{e \in \delta^+(v)} \kappa_{c,\mathcal{T}}(e), \quad v \in V,
\end{align*}
and using the Fibonacci heap data structure~\cite{Fredman1987} to maintain the key values. 
Let $n = |V|, m=|E|$, and $W = \max_{e \in E}|b(e)|$.
The path $\mathcal{T}_v$ changes at most $2(n-1)W$ times because the value of $b(\mathcal{T}_v)$ can take integer values between $-(n-1)W$ and $(n-1)W$ and must increase when $\mathcal{T}_v$ changes.
Using this fact, they reduced the time complexity to $O((m + n\log n)nW)$. 

We note that the parametric shortest path algorithm can be used to solve the non-parametric MCM.
To do that, we set $b(e) = 1$ for all $e \in E$. 
If the zero-cost cycle is detected at $t = \gamma$, then $\gamma$ is the minimum cycle mean with respect to the cost function $a$.
In this case, the time complexity of the algorithm by Young et al.~\cite{Young1991} is $O(mn+n^2\log n)$.
This is not better than Karp's method in~\eqref{eq:mcmkarp}.
However, it is suggested in~\cite{Young1991} that this worst-case bound is not achieved in practice and the expected time complexity is $O(m+n\log n)$, while Karp's method always requires $O(mn)$ computation time.

\section{Algorithms for parametric minimum cycle mean}

In this section, we present an algorithm to solve the parametric MCM based on the parametric shortest path algorithm.
Let $\mathcal{G}= (V, E)$ be a digraph and $c = a-bt$ be the parametric cost function determined by $a,b: E \to \mathbb{Z}$.
We may assume that $\mathcal{G}$ is strongly connected.
Indeed, every cycle in $\mathcal{G}$ is contained in some strongly connected component,
so we can solve the problem for each component separately.
For $\alpha, \beta \in \mathbb{R}$, we define functions $a_\alpha, b_\beta: E \to \mathbb{R}$ by
\begin{align*}
	a_{\alpha}(e) := a(e)-\alpha, \quad b_{\beta}(e) := b(e)-\beta, \quad e \in E.
\end{align*}
For $\gamma = \alpha - \beta t$, where $\alpha,\beta \in \mathbb{R}$, we define the modified cost function $c_{\gamma}: E \to \mathbb{R}$ by 
\begin{align*}
	c_{\gamma}(t;e) := a_{\alpha}(e) - b_{\beta}(e)t = (a(e)-b(e)t) - (\alpha - \beta t), \quad e \in E,\, t \in \mathbb{R}.
\end{align*}

The main part of the present algorithm, \textsc{ParametricMinimumCycleMean}, is shown in Algorithm~\ref{alg:main}.
We describe its outline.
First, we take a sufficiently small value $t_0$ and solve the non-parametric MCM with respect to the cost function $c(t_0) = a-bt_0$, finding the cycle $\mathcal{C}_1$ attaining the minimum.
We then begin the iteration by increasing the value of $t$.
The key idea is to keep the minimum cycle mean equal to zero throughout the algorithm.
To do that, we modify the cost function to 
\begin{align*}
	c_{\gamma_1} := a_{\alpha_1} - b_{\beta_1}t = c -(\alpha_1-\beta_1 t),\quad
	\text{where } \alpha_1 := \frac{a(\mathcal{C}_1)}{|\mathcal{C}_1|},\, \beta_1 :=  \frac{b(\mathcal{C}_1)}{|\mathcal{C}_1|},\,
	\gamma_1 := \alpha_1 - \beta_1 t.
\end{align*}
Then, we have $c_{\gamma_1}(t;\mathcal{C}_1) = 0$ for all $t \in \mathbb{R}$.
In addition, since $c_{\gamma_1}(t_0;e) = c(t_0;e) - \gamma_1(t_0)$ for all $e \in E$, the cycle mean of every cycle is reduced by the same amount $\gamma_1(t_0)$.
Hence, the minimum cycle mean with respect to $c_{\gamma_1}(t_0)$ is attained by $\mathcal{C}_1$.
For the modified cost function $c_{\gamma_1}$, the vertex keys are computed by using  $a_{\alpha_1}(\mathcal{T}_v)$ and $b_{\beta_1}(\mathcal{T}_v)$. These subroutines are shown in Algorithms~\ref{alg:tree} and~\ref{alg:updatekey}.
We find the breakpoint $t_1$ where another zero-cost cycle $\mathcal{C}_2$ with respect to $c_{\gamma_1}(t_1)$ appears.
This is done by solving the parametric shortest path problem for any fixed root $r \in V$.
Then, we again modify the cost function to 
\begin{align*}
	c_{\gamma_2} := a_{\alpha_2} - b_{\beta_2}t = c -(\alpha_2-\beta_2 t),\quad
	\text{where } \alpha_2 := \frac{a(\mathcal{C}_2)}{|\mathcal{C}_2|},\, \beta_2 :=  \frac{b(\mathcal{C}_2)}{|\mathcal{C}_2|},\,
	\gamma_2 := \alpha_2 - \beta_2 t,
\end{align*}
and solve the parametric shortest path problem to find the next breakpoint.
Since the cost functions $c_{\gamma_1}(t_1)$ and $c_{\gamma_2}(t_1)$ are identical, the shortest path tree $\mathcal{T}$ is unchanged.
The values $a_{\alpha_2}(\mathcal{T}_v)$ and $b_{\beta_2}(\mathcal{T}_v)$ must be computed at this point.
Repeating this process, we obtain the sequences of breakpoints $t_0,t_1,t_2,\dots,t_{N}$, minimum cycle means $\gamma_1,\gamma_2,\dots,\gamma_N$ in the corresponding intervals, and cycles $\mathcal{C}_1,\mathcal{C}_2,\dots,\mathcal{C}_N$ attaining the minima. 

\begin{algorithm}	\caption{\textsc{ParametricMinimumCycleMean}}	\label{alg:main}
    \KwIn{Digraph $\mathcal{G}=(V,E)$, cost functions $a,b: E \to \mathbb{Z}$}
    \KwOut{Breakpoints $t_0,t_1,\dots,t_{N} \in \mathbb{R} \cup \{-\infty,\infty\}$, linear forms $\gamma_1,\gamma_2,\dots,\gamma_N$ and cycles $\mathcal{C}_1,\mathcal{C}_2,\dots,\mathcal{C}_N$}
    $t_0 \gets -2n^2 \max_{e \in E} |a(e)| -1,\, k \gets 1$\;
    $\mathcal{C}_1 \gets \text{minimum mean cycle at $t=t_0$},\, \alpha_1 \gets \dfrac{a(\mathcal{C}_1)}{|\mathcal{C}_1|},\, \beta_1 \gets \dfrac{b(\mathcal{C}_1)}{|\mathcal{C}_1|},\, \gamma_1 \gets \alpha_1 - \beta_1 t$\;
    $\mathcal{T} \gets \text{$r$-rooted shortest path tree with respect to cost function $c_{\gamma_1}(t_0)$}$\;
    $(a_*,b_*) \gets \textsc{TreePath}(\mathcal{T},\alpha_1,\beta_1,a,b)$\;
    \For{$v \in V$}{
    	$(\kappa(v),\sigma(v)) \gets \textsc{UpdateKey}(\mathcal{G},v,a_*,b_*)$\;
    }
    \While{$\min_{v \in V} \kappa(v) < \infty$}{
    	$v^* \gets \argmin_{v \in V} \kappa(v),\, e^* \gets \sigma(v^*)$\;
    	\eIf{$v^*$ is on $\mathcal{T}_{(e^*)^+}$}{
    		$t_k \gets \kappa(v^*)$\;
    		$\mathcal{C}_{k+1} \gets \{e^*\} \cup \mathcal{T}_{[(e^*)^+,v^*]}$\;
    		$\alpha_{k+1} \gets \dfrac{a(\mathcal{C}_{k+1})}{|\mathcal{C}_{k+1}|},\, \beta_{k+1} \gets \dfrac{b(\mathcal{C}_{k+1})}{|\mathcal{C}_{k+1}|},\, \gamma_{k+1} \gets \alpha_{k+1} - \beta_{k+1} t$\;
    		$(a_*,b_*) \gets \textsc{TreePath}(\mathcal{T},\alpha_{k+1},\beta_{k+1},a,b)$\;
    		\For{$v \in V$}{
    			$(\kappa(v),\sigma(v)) \gets \textsc{UpdateKey}(\mathcal{G},v,a_*,b_*)$\;
    		}
    		$k \gets k+1$\;
    	}{
    		$e' \gets \text{unique edge in $\delta^+(v^*) \cap \mathcal{T}$}$\;
    		$\mathcal{T} \gets \mathcal{T} \cup e^* \setminus e'$\;
    		$\Delta a \gets a_*(e^*) + a_*((e^*)^+) - a_*(v^*),\, \Delta b \gets b_*(e^*) + b_*((e^*)^+) - b_*(v^*)$\;
    		\For{$v$ on $\mathcal{T}_{\to v^*}$}{
    			$a_*(v) \gets a_*(v) + \Delta a,\, b_*(v) \gets b_*(v) + \Delta b$\;
		}
		\For{$v$ on $\mathcal{T}_{\to v^*}$}{
    			$(\kappa(v),\sigma(v)) \gets \textsc{UpdateKey}(\mathcal{G},v,a_*,b_*)$\;
    			\For{$e \in \delta^-(v)$}{
    				\If{$b_*(e) + b_*(v) > b_*(e^-)$}{
    					$\kappa \gets  \dfrac{a_*(e) + a_*(v) - a_*(e^-)}{b_*(e) + b_*(v) - b^*(e^-)}$\;
    					\If{$\kappa < \kappa(e^-)$}{
    						$\kappa(e^-) \gets \kappa,\, \sigma(e^-) \gets e$\;
					}
   				}
    			}
    		}
    	}
    }
    $N \gets k,\, t_0 \gets -\infty,\, t_{N} \gets \infty$\;
    \KwRet $((t_0,t_1,\dots,t_N), (\gamma_1,\gamma_2,\dots,\gamma_N), (\mathcal{C}_1,\mathcal{C}_2,\dots,\mathcal{C}_N))$\;
\end{algorithm}

\begin{algorithm}	\caption{$\textsc{TreePath}(\mathcal{T},\alpha,\beta,a,b)$}		\label{alg:tree}
	\KwIn{$r$-rooted directed tree $\mathcal{T}$ on vertex set $V$, $\alpha, \beta \in \mathbb{R}$, functions $a,b: E \to \mathbb{Z}$}
	\KwOut{Functions $a_*,b_*: V \cup E \to \mathbb{R}$}
	\For{$e \in \mathcal{T}$}{
		$a_*(e) \gets a(e) - \alpha,\, b_*(e) \gets b(e) -\beta$\:
	}
	$S \gets \{r\},\, a_*(r) \gets 0,\, b_*(r) \gets 0$\;
	\While{$S \neq \emptyset$}{
		Take any $v \in S$\;
		\For{$e \in \delta^-(v) \cap \mathcal{T}$}{
			$a_*(e^-) \gets a_*(e) + a_*(v),\, b_*(e^-) \gets b_*(e) + b_*(v)$\;
			$S \gets S \cup \{e^-\}$\;
		}
		$S \gets S \setminus \{v\}$\;
	}
	\KwRet $(a_*,b_*)$\;
\end{algorithm}

\begin{algorithm}	\caption{$\textsc{UpdateKey}(\mathcal{G},v,a_*,b_*)$}	\label{alg:updatekey}
	\KwIn{Digraph $\mathcal{G}=(V,E)$, vertex $v \in V$, functions $a_*,b_*: V \cup E \to \mathbb{R}$}
	\KwOut{$\kappa \in \mathbb{R} \cup \{\infty\},\, \sigma \in E$}
	$\kappa \gets \infty$\;
	\For{$e \in \delta^+(v)$}{
		\If{$b_*(e) + b_*(e^+) > b_*(v)$}{
			$\kappa' \gets \dfrac{a_*(e) + a_*(e^+) - a_*(v)}{b_*(e) + b_*(e^+) - b_*(v)}$\;
			\If{$\kappa' < \kappa$}{
				$\kappa \gets \kappa',\, \sigma \gets e$\;
			}
		}
	}
	\KwRet $(\kappa,\sigma)$\;
\end{algorithm}

When we update the shortest path tree, we take the same approach as Young et al.~\cite{Young1991} to reduce the computational complexity.
We use the vertex keys for $v \in V$ that are maintained by the Fibonacci heap data structure.
If we replace an edge $e' \in \mathcal{T}$ with $e^*$, where $(e')^- = (e^*)^- = v^*$, the shortest paths $\mathcal{T}_v$ are changed for all $v$ on the subtree $\mathcal{T}_{\to v^*}$.
Hence, the keys of all such vertices $v$ need to be updated.
At the same time, the edge keys of $e$ such that $e^+ = v$ are decreased.
Any other keys remain unchanged.

\subsection{Theoretical analysis of Algorithm \ref{alg:main}}

Here, we prove the correctness of Algorithm~\ref{alg:main} and analyze its time complexity.
Let us define $n := |V|, m := |E|$, and $W := \max_{e \in E}|b(e)|$.
We may assume that $W \geq 1$; otherwise, the problem is just a non-parametric MCM.
The following lemmas for the subroutines are straightforward.

\begin{lemma}
Let $\mathcal{T}$ be an $r$-rooted tree on the vertex set $V$.
For $\alpha, \beta \in \mathbb{R}$ and functions $a,b: E \to \mathbb{Z}$, Algorithm~\ref{alg:tree} computes $a_\alpha(\mathcal{T}_v)$ and $b_\beta(\mathcal{T}_v)$ for all $v \in V$ in $O(n)$ time.
\end{lemma}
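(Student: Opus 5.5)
We establish correctness by induction on the order in which vertices are removed from $S$, and the $O(n)$ bound by counting how often each vertex and each tree edge is handled. The key structural fact is that in an $r$-rooted tree $\mathcal{T}$ every $v \neq r$ has exactly one outgoing tree edge $e_v \in \delta^+(v)\cap\mathcal{T}$, with head $e_v^+$. Hence $\mathcal{T}_v = \{e_v\} \cup \mathcal{T}_{e_v^+}$, and therefore
\begin{align*}
	a_\alpha(\mathcal{T}_v) = a_\alpha(e_v) + a_\alpha(\mathcal{T}_{e_v^+}), \qquad b_\beta(\mathcal{T}_v) = b_\beta(e_v) + b_\beta(\mathcal{T}_{e_v^+}),
\end{align*}
with $a_\alpha(\mathcal{T}_r) = b_\beta(\mathcal{T}_r) = 0$, because $\mathcal{T}_r$ is the trivial path $(r)$. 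The first loop of Algorithm~\ref{alg:tree} sets $a_*(e) = a_\alpha(e)$ and $b_*(e) = b_\beta(e)$ for every $e \in \mathcal{T}$. The assignment $a_*(e^-) \gets a_*(e) + a_*(v)$ for $e \in \delta^-(v)\cap\mathcal{T}$ is exactly the recursion above with $v = e^+$ and $e = e_{e^-}$.

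The induction invariant is: whenever a vertex $v$ is in $S$, the values $a_*(v)$ and $b_*(v)$ already equal $a_\alpha(\mathcal{T}_v)$ and $b_\beta(\mathcal{T}_v)$. This holds for $r$ at initialization. If it holds for $v$ when $v$ is processed, then each $u = e^-$ with $e \in \delta^-(v)\cap\mathcal{T}$ receives the correct value by the recursion, and only then is $u$ inserted into $S$. It remains to show that every vertex is eventually inserted, and inserted exactly once, so that its value is never overwritten. Since $\mathcal{T}$ has no cycles and each non-root vertex has a unique tree out-edge, each $u \neq r$ is reached along $\mathcal{T}_u$ reversed. We then argue by induction on $|\mathcal{T}_u|$ that $u$ is inserted into $S$: its parent $e_u^+$ is inserted by the inductive hypothesis, is later processed, and when processed inserts $u$. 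Uniqueness follows because $u$ is inserted only when its unique parent is processed, and no vertex is processed twice: a second processing would require re-insertion, and by the same argument applied to the parent this never happens, since $r$ has no tree out-edge and thus is never re-inserted.

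For the running time, the first loop costs $O(|\mathcal{T}|) = O(n-1)$. Each vertex is inserted into and removed from $S$ once. The inner loop over $\delta^-(v)\cap\mathcal{T}$, summed over all $v$, touches each tree edge exactly once, so it costs $O(n)$ in total. This presumes that $\mathcal{T}$ is stored with child lists, so that $\delta^-(v)\cap\mathcal{T}$ can be enumerated in time proportional to its size rather than $\deg^-(v)$, and that $S$ is a stack or queue. The main point of care is this representation assumption: scanning all of $\delta^-(v)$ in $\mathcal{G}$ would instead give $O(m)$. I will state explicitly that $\mathcal{T}$ is maintained with child lists, which is possible in Algorithm~\ref{alg:main} because each tree update replaces a single edge in $O(1)$ list operations.
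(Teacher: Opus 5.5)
Your proof is correct and follows the same route as the paper. Both traverse $\mathcal{T}$ backward from $r$, building each $\mathcal{T}_v$ as $\{e_v\}\cup\mathcal{T}_{e_v^+}$ and handling every tree vertex and edge exactly once; you just make explicit the invariant, the exactly-once insertion argument, and the child-list representation of $\mathcal{T}$, all of which the paper's two-line proof leaves implicit.
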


\proof
We have $a_*(v) =  a_{\alpha}(\mathcal{T}_v)$ and $b_*(v) = b_{\beta}(\mathcal{T}_v)$ for each $v \in V$ because the algorithm construct paths ending at $r$ by traversing $\mathcal{T}$ backward. As all vertices and edges in $\mathcal{T}$ are scanned exactly once, the time complexity is $O(n)$.
\endproof

\begin{lemma}		\label{lem:updatekey}
Let $\mathcal{T}$ be an $r$-rooted tree on the vertex set $V$.
For $\alpha, \beta \in \mathbb{R}$ and functions $a,b: E \to \mathbb{Z}$, suppose that $a_*(v) =  a_{\alpha}(\mathcal{T}_v)$ and $b_*(v) = b_{\beta}(\mathcal{T}_v)$ for all $v \in V$.
Let $\gamma = \alpha - \beta t$.
Then, for any $v \in V$, Algorithm~\ref{alg:updatekey} computes the vertex key $\kappa_{c_\gamma,\mathcal{T}}(v)$ and $e \in E$ attaining its minimum in $O(\deg^+(v))$ time.
\end{lemma}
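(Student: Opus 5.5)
The plan is to unwind the definitions and match them line by line with Algorithm~\ref{alg:updatekey}. The vertex key $\kappa_{c_\gamma,\mathcal{T}}(v)$ is, by definition, the minimum over $e \in \delta^+(v)$ of the edge keys $\kappa_{c_\gamma,\mathcal{T}}(e)$. Here the edge key is formed with the modified cost $c_\gamma = a_\alpha - b_\beta t$, so $a,b$ in the formula for $\kappa_{c,\mathcal{T}}(e)$ are replaced by $a_\alpha, b_\beta$. Thus I only need to show that, for each $e \in \delta^+(v)$, the quantity the loop compares against $\kappa$ equals $\kappa_{c_\gamma,\mathcal{T}}(e)$.

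First, I would fix what $a_*(e)$ and $b_*(e)$ mean for a general edge $e \in \delta^+(v)$, not necessarily in $\mathcal{T}$. This is the only delicate point: Algorithm~\ref{alg:tree} assigns $a_*(e), b_*(e)$ only on tree edges. I would adopt the convention, implicit in the algorithm, that $a_*(e) = a_\alpha(e) = a(e)-\alpha$ and $b_*(e) = b_\beta(e) = b(e)-\beta$ for every $e \in E$. These values are computable in $O(1)$ from $a,b,\alpha,\beta$, or may be stored once.

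Under this convention and the hypothesis $a_*(u) = a_\alpha(\mathcal{T}_u)$, $b_*(u) = b_\beta(\mathcal{T}_u)$ for all $u$, the test in the loop becomes $b_\beta(e) + b_\beta(\mathcal{T}_{e^+}) > b_\beta(\mathcal{T}_{e^-})$, since $e^- = v$. This is exactly the case condition in the definition of $\kappa_{c_\gamma,\mathcal{T}}(e)$. When the test holds, $\kappa'$ is exactly the corresponding ratio. When it fails, the edge contributes $\infty$, and skipping it leaves the running minimum unchanged.

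Hence, by a simple loop invariant, after processing the edges $e_1,\dots,e_j$ of $\delta^+(v)$ the variable $\kappa$ equals $\min_{i \le j} \kappa_{c_\gamma,\mathcal{T}}(e_i)$, and $\sigma$ is an edge attaining this minimum whenever it is finite. At termination $\kappa = \kappa_{c_\gamma,\mathcal{T}}(v)$. If $\kappa = \infty$, then $\sigma$ is irrelevant or left undefined, as in the algorithm.

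For the running time, each iteration performs a constant number of lookups, additions, one comparison, and possibly one division. The loop runs $\deg^+(v)$ times, giving $O(\deg^+(v))$ time, plus $O(1)$ for initialization.

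The main obstacle is only the bookkeeping convention for $a_*(e), b_*(e)$ on non-tree edges. I expect to handle it by stating this convention explicitly rather than through any real mathematical difficulty.
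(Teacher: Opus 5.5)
Your proposal is correct and follows the paper's own argument. You identify the test and the value $\kappa'$ in Algorithm~\ref{alg:updatekey} with the case condition and the ratio that define $\kappa_{c_\gamma,\mathcal{T}}(e)$ for $c_\gamma = a_\alpha - b_\beta t$, and note that each edge of $\delta^+(v)$ is scanned once. Your explicit convention $a_*(e) = a_\alpha(e)$, $b_*(e) = b_\beta(e)$ for every $e \in E$ is a genuine clarification: the paper uses it silently, since Algorithm~\ref{alg:tree} assigns these values only on tree edges while Algorithm~\ref{alg:updatekey} reads them on all of $\delta^+(v)$.
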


\proof
By noting that $c_\gamma = a_\alpha - b_\beta t$, we have $\kappa' = \kappa_{{c_\gamma},\mathcal{T}}(e)$ if $b_\beta(e) + b_\beta(\mathcal{T}_{e^+}) > b_\beta(\mathcal{T}_v)$.
As all edges in $\delta^+(v)$ are scanned exactly once, the time complexity is $O(\deg^+(v))$.
\endproof

The initialization of Algorithm~\ref{alg:main} is justified by the following lemma.
\begin{lemma}		\label{lem:initial}
Let $t_0 =  -2n^2\max_{e\in E}|a(e)| -1$ and $\mathcal{C}_1$ be the minimum mean cycle with respect to the cost function $c(t_0)$.
Then, the minimum cycle mean with respect to $c(t)$ is attained by $\mathcal{C}_1$ for any $t \leq t_0$
\end{lemma}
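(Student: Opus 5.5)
The approach is to compare cycle means as linear functions of $t$. Each cycle $\mathcal{C}$ has mean $\gamma(t;\mathcal{C}) = A(\mathcal{C}) - B(\mathcal{C})t$, where $A(\mathcal{C}) = a(\mathcal{C})/|\mathcal{C}|$ and $B(\mathcal{C}) = b(\mathcal{C})/|\mathcal{C}|$. We have $|A(\mathcal{C})| \le M := \max_{e\in E}|a(e)|$, and $B(\mathcal{C})$ is a rational number with denominator at most $n$.

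The key observation concerns two cycles $\mathcal{C},\mathcal{C}'$ with $B(\mathcal{C}) \neq B(\mathcal{C}')$. Then
\[
|B(\mathcal{C}) - B(\mathcal{C}')| \ge \frac{1}{|\mathcal{C}||\mathcal{C}'|} \ge \frac{1}{n^2},
\]
since the difference equals $(b(\mathcal{C})|\mathcal{C}'| - b(\mathcal{C}')|\mathcal{C}|)/(|\mathcal{C}||\mathcal{C}'|)$, whose numerator is a nonzero integer. Consequently the two mean lines cross at
\[
t^\times = \frac{A(\mathcal{C}) - A(\mathcal{C}')}{B(\mathcal{C}) - B(\mathcal{C}')},
\]
and $|t^\times| \le 2M n^2 < |t_0|$. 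So all pairwise crossings lie strictly to the right of $t_0$.

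Now let $\mathcal{C}_1$ minimize the mean at $t_0$, and let $\mathcal{C}$ be any cycle. We show $\gamma(t;\mathcal{C}_1) \le \gamma(t;\mathcal{C})$ for all $t \le t_0$, by cases on the slopes.

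1. If $B(\mathcal{C}) = B(\mathcal{C}_1)$, the difference $\gamma(t;\mathcal{C}) - \gamma(t;\mathcal{C}_1)$ is constant in $t$. It is nonnegative at $t_0$, so it is nonnegative everywhere.
2. If $B(\mathcal{C}) \neq B(\mathcal{C}_1)$, the difference is a linear function of $t$ with its unique zero at $t^\times > t_0$. Hence it has constant sign on $(-\infty, t_0]$. At $t_0$ it is nonnegative, and it cannot be zero there because $t^\times \neq t_0$. So it is strictly positive at $t_0$, and therefore positive for all $t \le t_0$.

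Together these show that $\mathcal{C}_1$ attains the minimum cycle mean for every $t \le t_0$.

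There is essentially no obstacle beyond the denominator bound on $|B(\mathcal{C}) - B(\mathcal{C}')|$ and the bound $|A(\mathcal{C}) - A(\mathcal{C}')| \le 2M$. The one edge case is $M = 0$: then $t_0 = -1$, all crossings are at $t^\times = 0 > t_0$, and the argument goes through unchanged.
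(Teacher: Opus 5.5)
Your proof is correct and is essentially the paper's argument. The paper uses the same two ingredients: slope differences $b(\mathcal{C}_1)/|\mathcal{C}_1| - b(\mathcal{C})/|\mathcal{C}|$ are either zero or at least about $1/n^2$ because the denominators are at most $n$, and intercept differences are at most $2\max_{e\in E}|a(e)|$. It phrases this as a contradiction (a cycle strictly better than $\mathcal{C}_1$ at some $t<t_0$ would make $\mathcal{C}_1$ strictly worse at $t_0$), whereas you show directly that every pairwise crossing point lies strictly to the right of $t_0$.
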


\proof
On the contrary, suppose that the minimum cycle mean with respect to $c(t)$ is not attained by $\mathcal{C}_1$ for some $t < t_0$,
but by another cycle $\mathcal{C}$. 
Then, we have
\begin{align*}
	\frac{c(t_0;\mathcal{C}_1)}{|\mathcal{C}_1|} - \frac{c(t_0;\mathcal{C})}{|\mathcal{C}|} \leq 0 < \frac{c(t;\mathcal{C}_1)}{|\mathcal{C}_1|} - \frac{c(t;\mathcal{C})}{|\mathcal{C}|},
\end{align*}
yielding
\begin{align*}
	\left( \frac{b(\mathcal{C}_1)}{|\mathcal{C}_1|} - \frac{b(\mathcal{C})}{|\mathcal{C}|} \right) (t_0 - t) > 0.
\end{align*}
By noting that $t < t_0$, we have $\dfrac{b(\mathcal{C}_1)}{|\mathcal{C}_1|} - \dfrac{b(\mathcal{C})}{|\mathcal{C}|} > 0$.
Since both $\dfrac{b(\mathcal{C}_1)}{|\mathcal{C}_1|}$ and $\dfrac{b(\mathcal{C})}{|\mathcal{C}|}$ are rational numbers whose denominators are at most $n$, we have
\begin{align*}
	\dfrac{b(\mathcal{C}_1)}{|\mathcal{C}_1|} - \dfrac{b(\mathcal{C})}{|\mathcal{C}|} \geq \frac{1}{n(n-1)} > \frac{1}{n^2}.
\end{align*}
Then, we have
\begin{align*}
	\frac{c(t_0;\mathcal{C}_1)}{|\mathcal{C}_1|} - \frac{c(t_0;\mathcal{C})}{|\mathcal{C}|}
	&= \left(\frac{a(\mathcal{C}_1)}{|\mathcal{C}_1|} - \frac{a(\mathcal{C})}{|\mathcal{C}|}\right) -\left(\frac{b(\mathcal{C}_1)}{|\mathcal{C}_1|} - \frac{b(\mathcal{C})}{|\mathcal{C}|} \right) t_0\\
	&\geq (-2\max_{e\in E}|a(e)|) + \left(\frac{b(\mathcal{C}_1)}{|\mathcal{C}_1|} - \frac{b(\mathcal{C})}{|\mathcal{C}|} \right) \cdot (2n^2\max_{e\in E}|a(e)| +1)\\
	&>  (-2\max_{e\in E}|a(e)|) + \frac{1}{n^2} \cdot 2n^2\max_{e\in E}|a(e)| \\
	&= 0.
\end{align*}
This contradicts the fact that $\mathcal{C}_1$ is the minimum mean cycle with respect to $c(t_0)$.
\endproof

From now on, suppose that $N$, $t_0$, and $t_k, \alpha_k, \beta_k, \gamma_k, \mathcal{C}_k$ for $k \in [N]$ are obtained during Algorithm~\ref{alg:main}.
The next lemma shows that Algorithm~\ref{alg:main} maintains $\mathcal{T}$ to be the $r$-rooted shortest path tree when the cost function is modified.

\begin{lemma}		\label{lem:costbp}
For $k \in [N-1]$, we have $c_{\gamma_k}(t_k;e) = c_{\gamma_{k+1}}(t_k;e)$ for all $e \in E$.
\end{lemma}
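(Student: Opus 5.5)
Since $c_{\gamma}(t;e) = c(t;e) - \gamma(t)$ for every $e \in E$, we have $c_{\gamma_k}(t_k;e) - c_{\gamma_{k+1}}(t_k;e) = \gamma_{k+1}(t_k) - \gamma_k(t_k)$ for every edge. This difference does not depend on $e$, so the lemma reduces to the scalar identity $\gamma_k(t_k) = \gamma_{k+1}(t_k)$. In words, the mean of $\mathcal{C}_k$ and the mean of $\mathcal{C}_{k+1}$, both computed with the original cost $c(t_k)$, must coincide.

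To prove this identity I would use two facts about the iteration that produced $t_k$.

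\emph{First fact.} We have $c_{\gamma_k}(t;\mathcal{C}_k) = 0$ for all $t$, since $\alpha_k$ and $\beta_k$ are the means of $a$ and $b$ over $\mathcal{C}_k$. This gives $c(t;\mathcal{C}_k)/|\mathcal{C}_k| = \gamma_k(t)$ identically in $t$.

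\emph{Second fact.} When the branch ``$v^*$ is on $\mathcal{T}_{(e^*)^+}$'' is taken, $t_k = \kappa(v^*)$. By Lemma~\ref{lem:updatekey} this equals the edge key $\kappa_{c_{\gamma_k},\mathcal{T}}(e^*)$ for the current tree $\mathcal{T}$, and the vertex data $a_*,b_*$ equal $a_{\alpha_k}(\mathcal{T}_v)$ and $b_{\beta_k}(\mathcal{T}_v)$. I need this correspondence to hold for the current values of $a_*,b_*$ after tree updates, not only for those produced by \textsc{TreePath}. This follows because the tree-update branch adds $\Delta a,\Delta b$ exactly to the subtree $\mathcal{T}_{\to v^*}$, which is precisely the change in $a_{\alpha_k}(\mathcal{T}_v)$ and $b_{\beta_k}(\mathcal{T}_v)$ for those $v$. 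By the definition of the key, at $t = t_k$ we get
\[
c_{\gamma_k}(t_k;e^*) + c_{\gamma_k}(t_k;\mathcal{T}_{(e^*)^+}) - c_{\gamma_k}(t_k;\mathcal{T}_{v^*}) = 0 .
\]
Since $v^*$ lies on $\mathcal{T}_{(e^*)^+}$, the path $\mathcal{T}_{(e^*)^+}$ splits as $\mathcal{T}_{[(e^*)^+,v^*]} \cup \mathcal{T}_{v^*}$. Cancelling $\mathcal{T}_{v^*}$ yields $c_{\gamma_k}(t_k;\mathcal{C}_{k+1}) = 0$.

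Unwinding $c_{\gamma_k} = c - \gamma_k$ on the cycle gives $c(t_k;\mathcal{C}_{k+1}) = |\mathcal{C}_{k+1}|\,\gamma_k(t_k)$. By the definition of $\alpha_{k+1},\beta_{k+1}$, the left side divided by $|\mathcal{C}_{k+1}|$ is $\gamma_{k+1}(t_k)$. Hence $\gamma_{k+1}(t_k) = \gamma_k(t_k)$, which together with the first paragraph proves the lemma.

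The main obstacle is the second fact. One must verify that at the moment $t_k$ is recorded, $\kappa(v^*)$ really is the key formula evaluated with the costs $a_{\alpha_k}, b_{\beta_k}$ relative to the current tree, and that $\kappa(v^*)$ is finite. Finiteness holds because the loop condition ensures $\kappa(v^*) < \infty$, so the denominator in the key is positive and the key equation holds with equality. The remaining step is then a direct calculation.
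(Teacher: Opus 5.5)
Your proof is correct and follows the same route as the paper: both reduce the claim to $\gamma_{k+1}(t_k)=\gamma_k(t_k)$ using $c_{\gamma_k}(t_k;\mathcal{C}_{k+1})=0$ together with the definitions of $\alpha_{k+1},\beta_{k+1}$. The only difference is that you derive the zero-cost property of $\mathcal{C}_{k+1}$ from the key equation and the splitting $\mathcal{T}_{(e^*)^+}=\mathcal{T}_{[(e^*)^+,v^*]}\cup\mathcal{T}_{v^*}$, and check that $a_*,b_*$ remain correct after tree updates, whereas the paper simply asserts that property; your ``first fact'' is not needed.
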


\proof
When the if-block in lines 9--16 of Algorithm~\ref{alg:main} is executed, we have 
\begin{align*}
	c_{\gamma_{k}}(t_k;\mathcal{C}_{k+1}) = (a(\mathcal{C}_{k+1}) - b(\mathcal{C}_{k+1}) t_k) - (\alpha_{k}-\beta_{k}t_k) |\mathcal{C}_{k+1}| = 0.
\end{align*}
According to the substitution in line 12, we have $a(\mathcal{C}_{k+1}) = \alpha_{k+1}|\mathcal{C}_{k+1}|$ and $b(\mathcal{C}_{k+1}) = \beta_{k+1}|\mathcal{C}_{k+1}|$.
Hence, we obtain
\begin{align*}
	(\alpha_{k+1} - \beta_{k+1} t_k) |\mathcal{C}_{k+1}|   - (\alpha_{k}-\beta_{k}t_k) |\mathcal{C}_{k+1}| = 0,
\end{align*}
yielding
\begin{align*}
	\alpha_{k+1} - \beta_{k+1} t_k =  \alpha_{k}-\beta_{k}t_k.
\end{align*}
Thus, we have $c_{\gamma_k}(t_k;e) = c_{\gamma_{k+1}}(t_k;e)$ for all $e \in E$.
\endproof

We define the globally modified cost function $c_*: E \to \mathbb{R}$ by
\begin{align*}
	c_*(t;e) := c_{\gamma_k}(t;e) = (a(e)-b(e)t) - (\alpha_k - \beta_k t) \quad \text{if } t \in [t_{k-1},t_k],
\end{align*}
which is continuous and piecewise linear as a function of $t \in \mathbb{R}$.
Note that $c_*(t;e)$ is well-defined at breakpoints because of Lemma~\ref{lem:costbp}. The case where $t < t_0$ is confirmed by Lemma~\ref{lem:initial}.
At any iteration of Algorithm~\ref{alg:main}, $\mathcal{T}_v$ is the shortest $v$-$r$ path with respect to $c_*(t)$.

\subparagraph{}
Next, we estimate an upper bound for the number of breakpoints.

\begin{lemma}		\label{lem:samebl}
Suppose that two paths $\mathcal{P}_1, \mathcal{P}_2$ satisfy $a(\mathcal{P}_1) \leq a(\mathcal{P}_2)$, $b(\mathcal{P}_1) = b(\mathcal{P}_2)$, and $|\mathcal{P}_1| = |\mathcal{P}_2|$. 
Then, we have $c(t;\mathcal{P}_1) \leq c(t;\mathcal{P}_2)$ and $c_*(t;\mathcal{P}_1) \leq c_*(t;\mathcal{P}_2)$
for any $t \in \mathbb{R}$.
\end{lemma}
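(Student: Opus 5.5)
The plan is to compute both costs directly and compare them term by term. Linearity of the cost in $t$ handles $c$. For $c_*$, the key observation is that the modification subtracts the same amount from every edge of a path, so the total subtraction depends only on the length of the path.

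\emph{Step 1: the cost $c$.} For any $t \in \mathbb{R}$ we have
\[
c(t;\mathcal{P}_i) = a(\mathcal{P}_i) - b(\mathcal{P}_i)\,t .
\]
Since $b(\mathcal{P}_1) = b(\mathcal{P}_2)$, the terms involving $t$ coincide, and therefore
\[
c(t;\mathcal{P}_1) - c(t;\mathcal{P}_2) = a(\mathcal{P}_1) - a(\mathcal{P}_2) \leq 0 .
\]

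\emph{Step 2: the cost $c_*$.} Fix $t$ and choose $k$ with $t \in [t_{k-1},t_k]$. By the definition of $c_*$ (well-defined at breakpoints by Lemma~\ref{lem:costbp}, and extended to $t < t_0$ via Lemma~\ref{lem:initial}),
\[
c_*(t;e) = c(t;e) - (\alpha_k - \beta_k t) \quad \text{for every } e \in E.
\]
Summing over the edges of a path gives
\[
c_*(t;\mathcal{P}_i) = c(t;\mathcal{P}_i) - |\mathcal{P}_i|\,(\alpha_k - \beta_k t).
\]
The correction term $|\mathcal{P}_i|(\alpha_k - \beta_k t)$ is the same for $i=1,2$ because $|\mathcal{P}_1| = |\mathcal{P}_2|$. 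Hence
\[
c_*(t;\mathcal{P}_1) - c_*(t;\mathcal{P}_2) = c(t;\mathcal{P}_1) - c(t;\mathcal{P}_2) \leq 0
\]
by Step 1.

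There is no genuine obstacle. The only point needing care is that $c_*$ is piecewise defined: we must check that at a breakpoint the same value of $c_*$ results whichever adjacent index $k$ is used. Lemma~\ref{lem:costbp} gives exactly this, so the argument applies for every real $t$.
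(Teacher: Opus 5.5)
Your proof is correct and matches the paper's argument: on $[t_{k-1},t_k]$ you expand $c_*(t;\mathcal{P}_i) = a(\mathcal{P}_i) - b(\mathcal{P}_i)t - (\alpha_k - \beta_k t)|\mathcal{P}_i|$, and then note that the $t$-terms and the correction terms coincide for the two paths. Your appeal to Lemma~\ref{lem:costbp} at breakpoints is harmless but not actually needed for the inequality, because both paths are always evaluated with the same index $k$.
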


\proof
Since $c(t;\mathcal{P}_i) = a(\mathcal{P}_i) - b(\mathcal{P}_i) t$ for $i=1,2$, we have $c(t;\mathcal{P}_1) \leq c(t;\mathcal{P}_2)$.
In addition, since 
\begin{align*}
	c_*(t;\mathcal{P}_i) = a(\mathcal{P}_i) - b(\mathcal{P}_i) t - (\alpha_k - \beta_k t) |\mathcal{P}_i| \quad \text{if } t \in [t_{k-1},t_k],
\end{align*}
for $i=1,2$, we have $c_*(t;\mathcal{P}_1) \leq c_*(t;\mathcal{P}_2)$ for all $t \in \mathbb{R}$.
\endproof

Let us consider the set $L_{\mathrm{cycle}} := \{ (|\mathcal{C}|, b(\mathcal{C})) \mid \mathcal{C} \text{ is a cycle} \}$.
For a cycle $\mathcal{C}$ with length $\ell$, we observe that $b(\mathcal{C})$ can take at most $2\ell W+1$ different values, say, $0, \pm 1, \dots, \pm \ell W$.
Hence, we have
\begin{align}
	|L_{\mathrm{cycle}}| = \sum_{\ell=1}^n (2\ell W+1) \leq (n^2+2n)W.	\label{eq:Lbound}
\end{align}

\begin{proposition}		\label{prop:mmcchange}
There exist a positive integer $M \leq |L_{\mathrm{cycle}}|$ and cycles $\mathcal{C}^{(1)},\mathcal{C}^{(2)},\dots,\mathcal{C}^{(M)}$ such that
\begin{align*}
	\gamma(t) = \min_{i=1,2,\dots,M} \frac{c(t;\mathcal{C}^{(i)})}{|\mathcal{C}^{(i)}|}.
\end{align*}
In particular, $\gamma(t)$ is the minimum of at most $(n^2+2n)W$ linear polynomials 
\begin{align*}
	\gamma_i := \frac{a(\mathcal{C}^{(i)})}{|\mathcal{C}^{(i)}|} - \frac{b(\mathcal{C}^{(i)})}{|\mathcal{C}^{(i)}|}\, t, \quad i=1,2,\dots,M.
\end{align*}
\end{proposition}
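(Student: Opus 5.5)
We group cycles by the pair $(|\mathcal{C}|, b(\mathcal{C}))$ and keep one representative per class, using Lemma~\ref{lem:samebl} to discard the rest. For each pair $(\ell,\beta) \in L_{\mathrm{cycle}}$, the set of cycles $\mathcal{C}$ with $|\mathcal{C}| = \ell$ and $b(\mathcal{C}) = \beta$ is nonempty by definition of $L_{\mathrm{cycle}}$. It is also finite, since $\mathcal{G}$ has finitely many cycles. So we can choose a cycle $\mathcal{C}_{(\ell,\beta)}$ in this class minimizing $a(\mathcal{C})$. Enumerating these representatives gives cycles $\mathcal{C}^{(1)},\dots,\mathcal{C}^{(M)}$ with $M = |L_{\mathrm{cycle}}|$; we may take $M$ to be exactly this number, which trivially satisfies $M \le |L_{\mathrm{cycle}}|$. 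Moreover $M \ge 1$ because we assumed $\mathcal{G}$ strongly connected, so at least one cycle exists.

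Next we show the minimum is unchanged by restricting to representatives. Fix $t \in \mathbb{R}$ and any cycle $\mathcal{C}$, and let $(\ell,\beta) = (|\mathcal{C}|, b(\mathcal{C}))$. The representative $\mathcal{C}_{(\ell,\beta)}$ has the same length and the same $b$-value, and $a(\mathcal{C}_{(\ell,\beta)}) \le a(\mathcal{C})$. The first part of Lemma~\ref{lem:samebl} then gives $c(t;\mathcal{C}_{(\ell,\beta)}) \le c(t;\mathcal{C})$. The lemma is stated for paths, but its proof only uses additivity over edge sets, so it applies verbatim to cycles identified with their edge sets. Dividing by the common positive length $\ell$ gives the same inequality for cycle means. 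Hence the minimum over all cycles is at least the minimum over the representatives, and the reverse inequality is trivial since the representatives are cycles. This yields the displayed identity for every $t$.

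The ``in particular'' statement follows by expanding $c(t;\mathcal{C}^{(i)})/|\mathcal{C}^{(i)}| = a(\mathcal{C}^{(i)})/|\mathcal{C}^{(i)}| - \bigl(b(\mathcal{C}^{(i)})/|\mathcal{C}^{(i)}|\bigr)t$, which is the linear polynomial $\gamma_i$. The count of at most $(n^2+2n)W$ such polynomials then comes from \eqref{eq:Lbound}.

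No step is a real obstacle. The only point needing care is that the choice of representative must not depend on $t$. This holds because within a class all cycles have the same slope $-b/\ell$, so minimizing $a$ minimizes the cycle mean for every $t$ simultaneously. That uniformity is exactly what Lemma~\ref{lem:samebl} supplies.
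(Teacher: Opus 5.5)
Your proposal is correct and follows essentially the same route as the paper: pick an $a$-minimizing cycle in each class of $L_{\mathrm{cycle}}$ and use Lemma~\ref{lem:samebl} to show it minimizes $c(t;\cdot)$ within its class for every $t$, then bound the count via \eqref{eq:Lbound}. The only cosmetic difference is that you keep all $|L_{\mathrm{cycle}}|$ representatives, whereas the paper discards redundant ones and renumbers; either gives $M \le |L_{\mathrm{cycle}}|$.
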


\proof
Let $L_{\mathrm{cycle}} = \{(\ell^{(i)}, b^{(i)}) \mid i=1,2,\dots,|L_{\mathrm{cycle}}|\}$.
For each $(\ell^{(i)}, b^{(i)}) \in L_{\mathrm{cycle}}$, we take a cycle $\mathcal{C}^{(i)}$ that minimizes $a(\mathcal{C})$ among all cycles $\mathcal{C}$ with $|\mathcal{C}| = \ell^{(i)}$ and $b(\mathcal{C}) = b^{(i)}$.
By Lemma~\ref{lem:samebl}, $\mathcal{C}^{(i)}$ also minimizes $c(t;\mathcal{C})$ for all $t \in \mathbb{R}$ among all those cycles.
Hence, by removing redundant cycles and renumbering, the minimum cycle mean $\gamma(t)$ is attained by some of $\mathcal{C}^{(1)},\mathcal{C}^{(2)},\dots,\mathcal{C}^{(M)}$ for any $t \in \mathbb{R}$, where $M \leq |L_{\mathrm{cycle}}|$.
\endproof

To evaluate the overall time complexity of the algorithm, we need to estimate the upper bound for the number of changes in the shortest path $\mathcal{T}_v$ for each $v \in V$.
For all $v \in V$, let us define the set $L_v := \{ (|\mathcal{P}|, b(\mathcal{P})) \mid \mathcal{P} \text{ is a $v$-$r$ path} \}$.
As in the case for $L_{\mathrm{cycle}}$, we have
\begin{align}
	|L_v| = \sum_{\ell=0}^{n-1} (2\ell W+1) \leq n^2W.	\label{eq:Lvbound}
\end{align}
Hence, $b_{\beta_k}(\mathcal{P}) = - b(\mathcal{P}) + \beta_k |\mathcal{P}|$ for $v$-$r$ paths $\mathcal{P}$ may take at most $n^2W$ different values for each $k$.
An easy extension from the parametric shortest path algorithm is that $\mathcal{T}_v$ changes at most $n^2W$ times between $t_k$ and $t_{k+1}$, which causes at most $O(n^4W^2)$ changes in total.
However, we can improve this estimate.
The following observation is useful for this purpose.

\begin{lemma}		\label{lem:swap}
Let $p_1,p_2,\dots,p_M$ be $M$ letters.
Starting from the sequence $p_1p_2\cdots p_M$, we swap two adjacent letters repeatedly.
Assume that each pair of letters $p_i$ and $p_j$ is swapped at most twice.
Then, the letter occupying the first position in the sequence changes at most $2(M-1)$ times.
\end{lemma}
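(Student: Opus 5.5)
We argue by looking only at the sequence of letters that occupy the first position. Let $x_0 = p_1, x_1, \dots, x_K$ be the successive letters in the first position, so $K$ is the number of changes we must bound. Each change $x_{i-1} \to x_i$ is caused by a swap of positions $1$ and $2$. Hence $x_i$ has just overtaken $x_{i-1}$, and immediately afterwards $x_i$ is ahead of every other letter. We will classify the changes into two kinds and bound each kind by $M-1$.

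A change $x_{i-1}\to x_i$ is called a \emph{first arrival} if $x_i \neq x_j$ for all $j<i$, and a \emph{return} otherwise. Since $x_0=p_1$ is already counted as present and each letter can arrive for the first time only once, there are at most $M-1$ first arrivals.

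For the returns, the key step is to show that every return permanently excludes its predecessor letter from the first position. Suppose $x_i = y$ is a return, and let $x := x_{i-1}$. Take the last index $j<i-1$ with $x_j=y$.
\begin{itemize}[label=$\circ$]
\item At time $j$, the letter $y$ was ahead of all letters, in particular ahead of $x$.
\item At time $i-1$, the letter $x$ is first, so it is ahead of $y$. Hence the pair $\{x,y\}$ has been swapped at least once between these times.
\item The change $x\to y$ at step $i$ is another swap of the same pair.
\end{itemize}
By assumption the pair $\{x,y\}$ is swapped at most twice, so both of its swaps are now used. Therefore $y$ stays ahead of $x$ for the rest of the process, and $x$ can never occupy the first position again. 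So the map sending each return to its predecessor letter $x_{i-1}$ is injective, because a letter that never reappears cannot be the predecessor of two different changes.

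Moreover, the letter $x_K$ that occupies the first position at the end is never such a predecessor. Indeed, any predecessor $x_{i-1}$ of a return never becomes first again, whereas $x_K$ is first at the end (and if $x_K = x_{i-1}$ with $i-1 < K$, it would become first again). Hence the number of returns is at most $M-1$, and $K \le (M-1)+(M-1) = 2(M-1)$.

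The main obstacle is the step for returns: it requires verifying carefully that the two swaps of $\{x,y\}$ really are two distinct swaps. The first swap makes $x$ overtake $y$ somewhere between times $j$ and $i-1$, and it is distinct from the final swap at position $1$ at step $i$, which makes $y$ overtake $x$. Once this is checked, the remaining counting is routine.
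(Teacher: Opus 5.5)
Your proof is correct, and it takes a genuinely different route from the paper's.

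The paper argues by induction on $M$. It singles out the initial leader $p_0$ and discards any letter that never overtakes $p_0$. Otherwise it records, for each $p_i$, the interval $[t_i^-,t_i^+)$ during which $p_i$ is ahead of $p_0$, and splits the timeline at $t^-=\min_i t_i^-$ and at the end $t^+$ of the maximal stretch covered by these intervals. Inside $[t^-,t^+)$ the front is held by a group $I$; afterwards it is held by $J\cup\{p_0\}$. Applying the induction hypothesis to each phase gives $2(|I|-1)+2|J|+2$.

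You instead use a direct charging argument. At most $M-1$ changes bring a letter to the front for the first time. Every other change, where $y$ returns and displaces $x$, uses up both allowed swaps of $\{x,y\}$, so $x$ never leads again. Hence returns inject into the letters other than the final leader, giving at most another $M-1$. The delicate point holds up: the swap by which $x$ overtook $y$ happens strictly after $y$ last took the lead and strictly before the swap $x\to y$, and it goes in the opposite direction, so the two swaps really are distinct.

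What your approach buys is that it needs no induction. It therefore avoids several steps the paper handles only implicitly: restricting the process to a time window, relabelling an arbitrary starting order as $p_1\cdots p_M$, deleting letters that never reach the front, and the maximality argument showing that every $j\in J$ has $t_j^-\ge t^+$. It also yields the finer statement that first arrivals and returns are each at most $M-1$. The paper's decomposition instead describes which groups of letters can hold the first position in which phases, but only the count is used later.
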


\proof
We prove the lemma by induction on $M$.
The case where $M=1$ is trivial.
Suppose that the assertion is true up to $M$.
Let us consider $M+1$ letters $p_0,p_1,\dots,p_M$ and the initial sequence $p_0p_1\cdots p_M$.
For $t \in \mathbb{Z}_{> 0}$, we denote $p_i \prec_t p_j$ if $p_i$ precedes $p_j$ in the sequence just after the $t$th swap. 
If some $i \in [M]$ satisfies $p_0 \prec_t p_i$ for all $t$, then the letter $p_i$ will never occupy the first position of the sequences.
Hence, we may ignore $p_i$ and consider only the remaining $M$ letters.
By induction, the first letter changes at most $2(M-1) \leq 2((M+1)-1)$ times.

Next, we assume that for all $i \in [M]$ there exists $t \in \mathbb{Z}_{>0}$ such that $p_i \prec_t p_0$.
By the assumption of the lemma, suppose that the swap of $p_0$ and $p_i$ occurs at the $t^-_i$th and $t^+_i$th swaps overall.
If the swap of $p_0$ and $p_i$ occurs exactly once, we set $t^+_i = \infty$.
Then, we observe that $p_i \prec_t p_0$ if $t \in [t^-_i, t^+_i) \cap \mathbb{Z}$ and $p_0 \prec_t p_i$ otherwise.
Let $t^- = \min_{i \in [M]} t^-_i$ and $t^+ = \max\{ t \mid [t^-,t) \subset \bigcup_{i=1}^M [t_i^-,t_i^+)\}$.
Subsequently, let $I = \{i \in [M] \mid [t^-_i, t^+_i) \subset [t^-,t^+)\}$ and $J = [M] \setminus I$.
By the maximality of $t^+$, we have $[t^-_j, t^+_j) \cap [t^-,t^+) = \emptyset$ for all $j \in J$,
yielding $p_i \prec_t p_0 \prec_t p_j$ for all $i \in I, j \in J$, and $t \in [t^-,t^+) \cap \mathbb{Z}$.
Hence, for any $t \in [t^-,t^+) \cap \mathbb{Z}$, the minimum letter with respect to $\prec_t$ is $p_i$ for some $i \in I$.
By induction, the first letter of the sequence changes at most $2(|I|-1)$ times between the $t^-$th and the $t^+$th swaps.
On the other hand, for any $t \in [t^+,\infty) \cap \mathbb{Z}$, the minimum letter with respect to $\prec_t$ is $p_j$ for some $j \in J \cup \{0\}$.
By induction, the first letter changes at most $2((|J|+1)-1)$ times after the $t^+$th swap.
By taking into account the changes occurring at the $t^-$th and the $t^+$th swaps, the total number of changes in the first letter of the sequence is at most
\begin{align*}
	2(|I|-1) + 2((|J|+1)-1) + 2 = 2(|I| + |J|) = 2((M+1)-1).
\end{align*}
This proves the assertion for $M+1$.
\endproof

\begin{proposition}		\label{prop:treechange}
During Algorithm~\ref{alg:main}, the shortest path $\mathcal{T}_v$ is updated at most $2(|L_v|-1)$ times for each $v \in V$.
\end{proposition}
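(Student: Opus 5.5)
We will apply Lemma~\ref{lem:swap} with the letters being the elements of $L_v$. For each pair $(\ell,\mathfrak{b}) \in L_v$, fix a representative $v$-$r$ path $\mathcal{P}_{(\ell,\mathfrak{b})}$ that minimizes $a(\mathcal{P})$ among all $v$-$r$ paths $\mathcal{P}$ with $|\mathcal{P}| = \ell$ and $b(\mathcal{P}) = \mathfrak{b}$. By Lemma~\ref{lem:samebl}, this representative satisfies $c_*(t;\mathcal{P}_{(\ell,\mathfrak{b})}) \leq c_*(t;\mathcal{P})$ for every such $\mathcal{P}$ and every $t$. Hence, at each $t$, the shortest $v$-$r$ path $\mathcal{T}_v$ with respect to $c_*(t)$ may be taken to be one of the representatives. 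We assume ties are broken consistently, as the algorithm only changes $\mathcal{T}_v$ when a strictly better candidate appears beyond the breakpoint. Each update of $\mathcal{T}_v$ then corresponds to a change of the class $(|\mathcal{T}_v|, b(\mathcal{T}_v))$ that attains the minimum. (If $\mathcal{T}_v$ is replaced by a path in the same class, that path has $b_*$-slope equal to the old one, so it never has a finite key against it; such updates do not occur.)

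Next, order the $|L_v|$ representatives by their value $f_{(\ell,\mathfrak{b})}(t) := c_*(t;\mathcal{P}_{(\ell,\mathfrak{b})})$ as $t$ increases from $t_0$, and view the sorted order as a sequence of letters. The key computation is that the difference of two such functions is linear in $t$, not merely piecewise linear:
\begin{align*}
	f_{(\ell,\mathfrak{b})}(t) - f_{(\ell',\mathfrak{b}')}(t) = \bigl(a(\mathcal{P}) - a(\mathcal{P}')\bigr) - (\mathfrak{b}-\mathfrak{b}')t - (\ell - \ell')\gamma(t),
\end{align*}
where $c_*(t;e) = c(t;e) - \gamma(t)$ and $\gamma(t)$ is the minimum cycle mean, a concave piecewise linear function. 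So the difference is a linear function plus a constant multiple of the concave function $\gamma$. It is therefore concave if $\ell > \ell'$, convex if $\ell < \ell'$, and linear if $\ell = \ell'$. A nonconstant concave or convex function changes sign at most twice, so each pair of letters swaps order at most twice. Ties on intervals of positive length need care: if two functions coincide on an interval, we break the tie consistently, or argue that coinciding on an interval is counted as no swap. Perturbing the order lexicographically by a fixed index should make the ordering switch at most twice per pair.

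Simultaneous swaps at a single value of $t$ can be serialized into adjacent transpositions, which does not increase the number of swaps of any pair. Lemma~\ref{lem:swap} then bounds the number of changes of the first letter by $2(|L_v|-1)$. Since the algorithm's $\mathcal{T}_v$ is always a minimizer, i.e. the first letter, this bounds the updates of $\mathcal{T}_v$.

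I expect the main obstacle to be the bookkeeping that matches the algorithm's actual updates of $\mathcal{T}_v$ to changes of the first letter. Degenerate ties are the issue: several classes can be simultaneously optimal at one $t$, and the algorithm may pass through intermediate trees at the same parameter value. I would handle this by choosing, at each breakpoint, the serialization of swaps to agree with the sequence of tree updates the algorithm performs. I would also use the strict increase of $b_*$-slope at each update, following Young et al., to rule out spurious updates.
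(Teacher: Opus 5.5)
Your skeleton is the paper's: one representative per class of $L_v$, letters ordered by $c_*$, at most two order changes per pair, then Lemma~\ref{lem:swap}. Writing $c_*(t;\mathcal{P})=c(t;\mathcal{P})-|\mathcal{P}|\gamma(t)$ with $\gamma$ concave is a tidy substitute for the paper's observation that $d_i-d_j$ is monotone because $\beta_k$ increases. (Your signs are swapped: the difference is convex when $\ell>\ell'$, which is harmless.)

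The gap is the step you postpone, and it is exactly where the paper's proof does its real work. The sentence ``the algorithm's $\mathcal{T}_v$ is always a minimizer, i.e.\ the first letter'' is false once classes tie, and your two remedies do not fit together. With a fixed-index tie-break each pair does change order at most twice, but the first letter need not follow the algorithm. Take distinct classes $X,Y$ with $|X|>|Y|$, $b(X)-b(Y)=\beta_k(|X|-|Y|)$ and $a(X)-a(Y)=\alpha_k(|X|-|Y|)$. They have equal $c_*$ on all of $[t_{k-1},t_k]$, and $Y$ is strictly cheaper just after $t_k$. The algorithm may move from some $Z$ to $X$ inside this interval. It cannot then pass from $X$ to $Y$ while $\beta_k$ is in force, since updates strictly increase $b_{\beta_k}(\mathcal{T}_v)$ and $b_{\beta_k}(X)=b_{\beta_k}(Y)$. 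So it is forced to leave $X$ at $t_k$. If $Y$ has the smaller index and remains cheapest after $t_k$, the algorithm makes two updates ($Z\to X\to Y$), but your first letter changes only once ($Z\to Y$).

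If instead you let the serialization follow the algorithm, you must bring to the front every intermediate path it passes through at one parameter value. That can require swapping pairs that are \emph{not} inverted between the two sides of the event. For example, at a kink $t_k$ of $\gamma$ a shorter path $W$ can touch the algorithm's intermediate path $X$ from below, so $W$ precedes $X$ on both sides. Bringing $X$ forward then means swapping this pair and swapping it back, although the sign of $c_*(W)-c_*(X)$ never changes. Your justification of the two-swap bound (serialization adds no swaps, plus at most two sign changes) no longer covers such swaps, so the count must be redone.

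This is what the paper supplies. At each event $t^*$ it first sorts the tied letters into non-increasing order of the right slope $d_i(t^*)$ (step 3); this step performs exactly such non-inverted swaps. It then bubbles them into non-decreasing order, always swapping the $\prec$-minimal inverted adjacent pair (step 4). Its first claim shows that every pair is still swapped at most twice with these extra swaps included. The priority rule makes tied letters of successively smaller slope reach the front one after another. Since each update strictly increases $b_{\beta_k}(\mathcal{T}_v)$, this gives that the number of updates of $\mathcal{T}_v$ is at most the number of first-letter changes.

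Without an explicit procedure of this kind and its two-swap verification, your argument bounds only the changes of a chosen minimizing class, not the updates actually made by Algorithm~\ref{alg:main}.
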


\proof
Let $|L_v| = M$ and $L_v = \{(\ell^{(i)}, b^{(i)}) \mid i \in [M] \}$.
For each $(\ell^{(i)}, b^{(i)}) \in L_v$, we take a $v$-$r$ path $\mathcal{P}_i$ that minimizes $a(\mathcal{P})$ among $v$-$r$ paths with $|\mathcal{P}| = \ell^{(i)}$ and $b(\mathcal{P}) = b^{(i)}$.
By Lemma~\ref{lem:samebl}, $\mathcal{P}_i$ also minimizes $c_*(t;\mathcal{P})$ for all $t \in \mathbb{R}$ among all those paths.
Using this path $\mathcal{P}_i$, we define functions $p_i(t)$ and $d_i(t)$ on $\mathbb{R}$ by
\begin{alignat*}{2}
	p_i(t) &:= c_*(t;\mathcal{P}_i) = (a(\mathcal{P}_i)-b(\mathcal{P}_i)t) - (\alpha_k - \beta_k t) |\mathcal{P}_i| &\quad& \text{if } t \in [t_{k-1},t_k], \\
	d_i(t) &:= -b_{\beta_k}(\mathcal{P}_i) = - b(\mathcal{P}_i) + \beta_k |\mathcal{P}_i| &\quad& \text{if } t \in [t_{k-1},t_k).
\end{alignat*} 
Then, $d_i(t)$ is the right-derivative of $p_i(t)$.

Now, we regard $p_1,p_2,\dots,p_{M}$ as $M$ letters and consider sequences of them.
We write $p_i \prec p_j$ if the letter $p_i$ precedes $p_j$ in the sequence under consideration.
We introduce the swap procedure as follows:
\begin{enumerate}
\item Arrange the initial sequence $p_1p_2\cdots p_{M}$ so that $p_1(t_0) \leq p_2(t_0) \leq \cdots \leq p_{M}(t_0)$, where $t_0 = -2n^2\max_{e\in E} |a(e)|-1$.
\item Set $t^* := t_0$.
\item If there exist adjacent letters $p_i \prec p_j$ such that $p_i(t^*)=p_j(t^*)$ and $d_i(t^*) < d_j(t^*)$, then swap $p_i$ and $p_j$.
Continue this process until no such pairs remain.
\item If there exist adjacent letters $p_i \prec p_j$ such that $p_i(t^*)=p_j(t^*)$ and $d_i(t^*) > d_j(t^*)$, then swap $p_i$ and $p_j$.
If there are multiple such pairs, we perform the swap of the pair that is minimal with respect to $\prec$.
Continue this process until no such pairs remain.
\item Take the minimum value $t > t^*$ satisfying the following condition:
\begin{itemize}
\item There exist distinct letters $p_i$ and $p_j$ such that $p_i(t) = p_j(t)$ and $d_i(t) \neq d_j(t)$.
\end{itemize}
If there is no such $t$, terminate the procedure.
\item Set $t^* := t$ and back to step 3. 
\end{enumerate}

\renewcommand{\qedsymbol}{$\blacksquare$}
\begin{claim}
During the above procedure, the swap of $p_i$ and $p_j$ occurs at most twice for each pair $(p_i,p_j)$.
\end{claim}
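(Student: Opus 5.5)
The plan is to reduce the claim to a statement about how often the sign of the difference of two letters can change. Fix a pair $p_i,p_j$ with $i\neq j$, and put $h(t):=p_i(t)-p_j(t)$. By the definition of $p_i$, on $[t_{k-1},t_k]$ we have
\begin{align*}
	h(t) = \bigl(a(\mathcal{P}_i)-a(\mathcal{P}_j)\bigr) - \bigl(b(\mathcal{P}_i)-b(\mathcal{P}_j)\bigr)t - \bigl(|\mathcal{P}_i|-|\mathcal{P}_j|\bigr)(\alpha_k-\beta_k t).
\end{align*}
On $[t_{k-1},t_k]$ the linear form $\alpha_k-\beta_k t$ is exactly $\gamma(t)$, since Algorithm~\ref{alg:main} makes $\mathcal{C}_k$ a minimum mean cycle there. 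Hence
\begin{align*}
	h(t) = A - Bt - \Lambda\,\gamma(t) \quad \text{for all } t\geq t_0,
\end{align*}
with constants $A$, $B$ and $\Lambda=|\mathcal{P}_i|-|\mathcal{P}_j|$. By Proposition~\ref{prop:mmcchange}, $\gamma$ is a minimum of linear functions, so it is concave.

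I would split into cases according to $\Lambda$.
\begin{enumerate}
\item If $\Lambda=0$, then $b(\mathcal{P}_i)\neq b(\mathcal{P}_j)$ because the pairs in $L_v$ are distinct. So $h$ is a nonconstant linear function and changes sign at most once.
\item If $\Lambda\neq 0$, we may assume $\Lambda>0$ by swapping the roles of $i$ and $j$. Then $h$ is linear minus a positive multiple of a concave function, so $h$ is convex. Therefore $\{t\mid h(t)<0\}$ is an interval, and the sign pattern of $h$ on $[t_0,\infty)$ is a subsequence of $+,0,-,0,+$.
\end{enumerate}
In both cases, the relative position of $p_i$ and $p_j$ can need reversing at most twice.

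The second step links swaps in the procedure to sign changes of $h$. I would prove the following invariant. After steps 3--4 at a time $t^*$, and until the next event time, the sequence is sorted by the value $p(t)$ for $t$ slightly larger than $t^*$. Equivalently, it is sorted lexicographically by $(p(t^*),d(t^*))$, since $d$ is the right derivative and $p$ is linear on a right neighbourhood of $t^*$. The initial arrangement gives this at $t_0$, with ties for $t<t_0$ resolved using Lemma~\ref{lem:initial}. A pair $p_i,p_j$ is swapped at time $t^*$ only if $h(t^*)=0$ and the right derivatives differ. Then the order just before $t^*$ reflects the sign of $h$ on a left neighbourhood, and the order just after reflects the sign on a right neighbourhood. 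So every net reversal of the pair corresponds to a sign change of $h$, or to $h$ leaving a zero set.

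The main obstacle is step 3. It can swap a pair into the ``left-limit'' order and step 4 then swaps it back, so one pair might be swapped twice at a single $t^*$ without a genuine sign change. I would handle this by showing that a pair is swapped in step 3 at $t^*$ only if its order was inconsistent with its right-derivative order while the values coincide. That situation can only arise at a point where $h$ reaches $0$ from a sign that the previous ordering recorded, so it should still be charged to an actual zero crossing or tangency of $h$.

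For the convex case, $h$ has at most two boundary points of its negative set. At a tangency where $h$ touches $0$ without changing sign, the right derivative of $h$ is nonnegative when approaching from the right side of the interval. In that case no swap in step 4 occurs for this pair, because $d_i=d_j$ or the order is already correct. The counting therefore gives at most one swap per endpoint of $\{h<0\}$, and hence at most two swaps in total. Verifying this bookkeeping carefully, especially at breakpoints $t_k$ where the left and right derivatives of $h$ differ, is where I expect the real work to lie.
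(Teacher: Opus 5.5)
Your convexity step is sound and is the same key fact the paper uses. Taking $|\mathcal{P}_i|\ge|\mathcal{P}_j|$, the paper notes that $d_i-d_j$ is non-decreasing because $\beta_k$ increases with $k$, which is exactly convexity of $h=p_i-p_j$. Your appeal to correctness of Algorithm~\ref{alg:main} for $\alpha_k-\beta_k t=\gamma(t)$ is not circular. The gap is in turning this into a swap count. You rightly flag step 3 as the obstacle, but what you say about it is wrong.

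Step 3 swaps an adjacent tied pair $p_x\prec p_y$ with $d_x(t^*)<d_y(t^*)$. That pair is \emph{already} in the correct order for $t$ slightly larger than $t^*$, and step 4 then swaps it back. So your characterization of step-3 swaps (``inconsistent with its right-derivative order'') is backwards. A tied pair with $d_i(t^*)\neq d_j(t^*)$ is swapped twice at $t^*$ exactly when its order entering $t^*$ already agrees with the right-limit order. This happens, for example, when $h$ touches $0$ in a V shape with $h>0$ on both sides. At such a point your claim that no step-4 swap occurs ``because the order is already correct'' is false: step 3 breaks that order and step 4 restores it. That gives two swaps, neither at an endpoint of $\{t\mid h(t)<0\}$. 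Hence ``at most one swap per endpoint of $\{h<0\}$'' does not count all swaps. The bound survives here only because $\{t\mid h(t)<0\}$ is then empty, which you never show.

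What closes the argument is the paper's terminal-swap observation, in two parts:
\begin{enumerate}
\item Every step-3 swap of a pair is followed by a step-4 swap of the same pair at the same $t^*$. The tied block ends step 3 sorted by non-increasing $d(t^*)$ and ends step 4 sorted by non-decreasing $d(t^*)$.
\item If the pair is swapped in step 4 and ends with $p_j\prec p_i$, then $d_j(t^*)<d_i(t^*)$. Monotonicity of $d_i-d_j$ then gives $p_j(t)<p_i(t)$ for all $t>t^*$, so the pair is never swapped again.
\end{enumerate}
Since swaps of a pair alternate in direction, it then suffices to rule out a double swap at $t^*$ that ends with $p_i\prec p_j$. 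This is where your convexity enters. Such a double swap needs $h(t^*)=0$ and $h'_+(t^*)<0$, so by convexity $h>0$ just left of $t^*$. The sequence entering step 3 is sorted on a left neighbourhood of $t^*$, so it has $p_j\prec p_i$, and step 3 does not touch the pair.

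That invariant holds at $t^*=t_0$ only if ties at $t_0$ are broken by non-increasing $d(t_0)$. Lemma~\ref{lem:initial} concerns cycles and says nothing about ties among the paths $\mathcal{P}_i$, and step 1 of the procedure leaves the tie-break unspecified; the paper's proof is silent on this too. With an arbitrary tie-break, take a pair tied at $t_0$ with $d_i(t_0)<d_j(t_0)$ and placed as $p_i\prec p_j$. It is swapped twice at $t_0$, and by convexity of $h$ it may be swapped again later. So you must impose this tie-break explicitly.
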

\proof[Proof of Claim]
We may assume without loss of generality that $|\mathcal{P}_i| \geq |\mathcal{P}_j|$.
We first note that if a swap occurs in step 3, then the swap of the same pair also occurs in step 4.
Indeed, among the letters $p_i$ with the same values of $p_i(t^*)$, they are arranged in non-increasing order of $d_i(t^*)$ at the end of step 3 and then in non-decreasing order of $d_i(t^*)$ at the end of step 4.

Suppose that the swap of $p_i$ and $p_j$ occurs in step 4 at $t^*$.
If $p_j \prec p_i$ at the end of step 4, we have $d_j(t^*) < d_i(t^*)$.
We observe that
\begin{align*}
	d_i(t) - d_j(t) = - b(\mathcal{P}_i) + b(\mathcal{P}_j) + \beta_k (|\mathcal{P}_i| - |\mathcal{P}_j|) \quad  \text{if } t \in [t_{k-1},t_k).
\end{align*}
Since $\beta_k$ increases as $k$ increases, $d_i(t) - d_j(t)$ is non-decreasing as $t$ increases.
Hence, we have $d_j(t) < d_i(t)$ for all $t > t^*$.
Since $p_i(t^*)=p_j(t^*)$, we have $p_i(t) > p_j(t)$ for all $t > t^*$.
This implies that the swap of $p_i$ and $p_j$ will never occur at $t > t^*$.
By taking into account the earlier swap that causes $p_i \prec p_j$, the swap of $p_i$ and $p_j$ occurs at most twice.
In particular, the above procedure terminates after a finite number of swaps.
\endproof
\begin{claim}
The path $\mathcal{P}_i$ corresponding to the first letter $p_i$ in the sequence at $t^*$ is the shortest $v$-$r$ path with respect to $c_*(t^*)$.
\end{claim}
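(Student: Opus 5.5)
We prove the claim by showing that the sequence is always sorted by $p_i(t^*)$ at the current value $t^*$. Minimality of the first letter then follows from the choice of the representatives $\mathcal{P}_i$. The plan is an induction over the steps of the swap procedure, with the invariant
\begin{align*}
	p_i \prec p_j \implies p_i(t^*) \leq p_j(t^*),
\end{align*}
together with the following refinement: after step 4 at $t^*$, letters with equal value $p_i(t^*)$ appear in non-decreasing order of $d_i(t^*)$.

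At the initial value $t^* = t_0$, the invariant holds by step 1. Steps 3 and 4 only swap adjacent letters with equal values at $t^*$, so they preserve the invariant. At the end of step 4, no adjacent pair with equal values has decreasing right-derivatives; hence each maximal block of equal values is ordered by non-decreasing $d_i(t^*)$.

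The main step is to show that the invariant is carried from $t^*$ to the next event $t$ chosen in step 5. Every $p_i$ is continuous and piecewise linear, with breakpoints only at the $t_k$. Consider adjacent letters $p_i \prec p_j$ and the interval $(t^*, t)$.
\begin{itemize}
\item If $p_i(t^*) < p_j(t^*)$, the two functions cannot cross before $t$, because any crossing point would be a point where $p_i = p_j$ and $d_i \neq d_j$, and it would have been chosen by step 5.
\item If $p_i(t^*) = p_j(t^*)$ and $d_i(t^*) \leq d_j(t^*)$, then $p_i \leq p_j$ just to the right of $t^*$. Either the two functions coincide, or they separate with $p_i < p_j$ and can only meet again at an event time.
\end{itemize}
Non-adjacent pairs follow by transitivity. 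Hence the order stays consistent with the values at $t$ as well. Slopes change at the $t_k$, which is why $p_i(t^*) = p_j(t^*)$ with $d_i(t^*) \neq d_j(t^*)$ is the right event condition. Continuity of $c_*$ (Lemma~\ref{lem:costbp}) ensures there are no jumps.

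Finally, let $p_i$ be the first letter. Then $c_*(t^*;\mathcal{P}_i) \leq c_*(t^*;\mathcal{P}_j)$ for every $j$. Any $v$-$r$ path $\mathcal{P}$ has $(|\mathcal{P}|, b(\mathcal{P})) \in L_v$, so by Lemma~\ref{lem:samebl} there is some $j$ with $c_*(t^*;\mathcal{P}_j) \leq c_*(t^*;\mathcal{P})$. Because of the modification by $\gamma_k$, the cost function $c_*(t^*)$ admits no negative cycles (every cycle mean is at least $0$). Therefore shortest walks may be taken to be paths, and $\mathcal{P}_i$ is a shortest $v$-$r$ path with respect to $c_*(t^*)$.

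The hardest part is the transfer step between events. One must make sure that no crossing occurs strictly between consecutive event times. One must also handle equal functions that later separate at a breakpoint $t_k$ where their slopes begin to differ. Such a point is itself an event by the step 5 condition, so it is handled by steps 3 and 4 there; this is the subtle point to check.
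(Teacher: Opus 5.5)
Your proof is correct and follows the paper's argument. Both use an inductive invariant that the sequence is sorted by $p_i(t^*)$: the within-block swaps of steps 3--4 preserve it, and it carries over to the next event because step 4 leaves each equal-value block in non-decreasing order of $d_i(t^*)$, so reversing the order of two letters would require a point with $p_i=p_j$ and $d_i\neq d_j$ strictly before the next event. You also spell out two things the paper leaves implicit: the strictly-ordered case with transitivity, and the passage from minimality among the representatives $\mathcal{P}_j$ to minimality among all $v$-$r$ walks (via Lemma~\ref{lem:samebl} and nonnegativity of cycle costs under $c_*$). One small imprecision: "can only meet again at an event time" is not literally true, since two separated functions can re-meet at some $t_k$ and coincide afterwards without an event. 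That is a touching rather than a crossing, so it does not affect the invariant.
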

\proof[Proof of Claim]
If $p_i(t^*)=p_j(t^*)$ and $p_j \prec p_i$ at the end of step 4, then we have $p_j(t) \leq p_i(t)$ for all $t \in [t^*, t']$, where $t'$ is the minimum value of $t > t^*$ satisfying the condition in step 5.
This is only because of $d_j(t) \leq d_i(t)$ and does not depend on whether $|\mathcal{P}_i| \geq |\mathcal{P}_j|$ or $|\mathcal{P}_i| < |\mathcal{P}_j|$.
In particular, $p_j \prec p_i$ leads to $p_j(t') \leq p_i(t')$.
Hence, we inductively observe that the letters in the sequence are arranged in non-decreasing order of $p_{i}(t^*) = c_*(t;\mathcal{P}_i)$ at any iteration of the procedure.
\endproof

\renewcommand{\qedsymbol}{$\square$}
\proof[Proof of Proposition~\ref{prop:treechange} (cont.)]
When the path $\mathcal{T}_v$ is updated at $t=\min_{v \in V} \kappa(v)$, the old path $\mathcal{T}_{v,\mathrm{old}}$ and the new one $\mathcal{T}_{v,\mathrm{new}}$ satisfy $c_*(t; \mathcal{T}_{v,\mathrm{old}}) = c_*(t; \mathcal{T}_{v,\mathrm{new}})$ and $b_{\beta_k}(\mathcal{T}_{v,\mathrm{old}}) < b_{\beta_k}(\mathcal{T}_{v,\mathrm{new}})$.
This implies that there exist $v$-$r$ paths $\mathcal{P}_i$ and $\mathcal{P}_j$ such that $p_i(t) = p_j(t)$ and $d_i(t) > d_j(t)$.
Hence, this value $t$ must have been detected at step 5 in the above procedure. 
If several updates of $\mathcal{T}_v$ occur at the same value $t$, each path $\mathcal{T}_v$ corresponds to $\mathcal{P}_i$ with the same value of $p_i(t)$ but distinct value of $d_i(t)$.
By noting the ordering established in step 3 and the priority for swaps in step 4, all corresponding letters $p_i$ have come to the first position of the sequence during step 4.
This implies that the number of times $\mathcal{T}_v$ is updated is at most the number of times the first letter has changed throughout the procedure.
By Lemma~\ref{lem:swap}, the letter occupying the first position changes at most $2(M-1)$ times in total.
Hence, $\mathcal{T}_v$ is updated at most $2(|L_v|-1)$ times for each $v \in V$.
\endproof

\begin{theorem}	\label{thm:main}
Algorithm~\ref{alg:main} solves the parametric MCM in $O((m+n \log n)n^2W)$ time.
\end{theorem}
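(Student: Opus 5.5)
The proof has two parts: correctness and running time. Each rests on the lemmas already established.

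For correctness, I argue by induction over the iterations of the while loop. The invariant is that, at the current parameter value $t$ (which lies in $[t_{k-1},t_k]$), $\mathcal{T}$ is an $r$-rooted shortest path tree with respect to $c_*(t)=c_{\gamma_k}(t)$, and the minimum cycle mean with respect to $c_{\gamma_k}(t)$ is $0$, attained by $\mathcal{C}_k$.

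The base case is Lemma~\ref{lem:initial}. Subtracting $\gamma_1(t_0)$ makes every cycle nonnegative, so a shortest path tree exists; strong connectivity guarantees it reaches $r$ from every vertex.

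Inside one interval the algorithm is exactly the Karp--Orlin/Young procedure applied to the fixed linear cost $c_{\gamma_k}=a_{\alpha_k}-b_{\beta_k}t$. Proposition~\ref{prop:psp}, together with Lemma~\ref{lem:updatekey} and the correctness of the incremental updates of $a_*,b_*$ on $\mathcal{T}_{\to v^*}$, keeps $\mathcal{T}$ a shortest path tree up to the next key. While a shortest path tree exists, no cycle has negative $c_{\gamma_k}$-cost. Hence $\gamma(t)\ge\gamma_k(t)$ on the interval, and equality holds because $c_{\gamma_k}(t;\mathcal{C}_k)=0$.

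When the minimum key closes a cycle $\mathcal{C}_{k+1}$, that cycle has zero cost at $t_k$. The increasing derivative then forces its cost negative beyond $t_k$, so $\gamma_{k+1}$ is the right linear piece. Lemma~\ref{lem:costbp} shows that $\mathcal{T}$ remains a shortest path tree for $c_{\gamma_{k+1}}(t_k)$, so the invariant is restored.

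Termination when all keys equal $\infty$ means no path or cycle ever overtakes, by the argument behind Proposition~\ref{prop:psp}. Hence $\gamma_N$ is valid on $[t_{N-1},\infty)$. One point needs a line of justification: $\beta_{k+1}>\beta_k$. This follows because $\mathcal{C}_{k+1}$ was produced by a key with positive denominator, i.e.\ $b_{\beta_k}(\mathcal{C}_{k+1})>0$. Strict growth gives distinct slopes, so the breakpoints are genuine.

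For the running time, I count three kinds of events.
\begin{itemize}
\item \emph{Breakpoints.} Each $\beta_k$ is of the form $b(\mathcal{C})/|\mathcal{C}|$, and the values strictly increase. By Proposition~\ref{prop:mmcchange} and \eqref{eq:Lbound} there are $N=O(n^2W)$ of them. Each one costs a \textsc{TreePath} call ($O(n)$ time) plus recomputation of all vertex keys and a heap rebuild ($O(m+n)$ time). The total is $O(mn^2W)$.
\item \emph{Tree updates.} Each update of the else-branch changes $\mathcal{T}_v$ for every $v\in\mathcal{T}_{\to v^*}$. By Proposition~\ref{prop:treechange} and \eqref{eq:Lvbound}, each vertex is touched at most $2(|L_v|-1)=O(n^2W)$ times. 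Charging the work of an update to the vertices in the subtree costs, per touched vertex $v$, $O(\deg^+(v)+\deg^-(v))$ for \textsc{UpdateKey} and the in-edge scan, plus $O(\log n)$ for heap operations. Summing over all vertices and all touches gives $O((m+n\log n)n^2W)$.
\item \emph{Initialization.} One non-parametric MCM costs $O(mn)$ by \eqref{eq:mcmkarp}, and one shortest path tree computation costs $O(mn)$ via Bellman--Ford. Both are dominated by the terms above.
\end{itemize}

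I expect the main obstacle to be the charging argument for tree updates. The heap operations must be amortized correctly: decrease-key in $O(1)$, and the key of a vertex in the subtree is reset (possibly increased), which I implement as delete-and-reinsert in $O(\log n)$. In addition, each extract-min must be attributed either to a breakpoint or to a change of some $\mathcal{T}_{v^*}$. Since $v^*$ itself lies in $\mathcal{T}_{\to v^*}$, every else-branch iteration is charged to at least one counted path change, and this closes the bound.
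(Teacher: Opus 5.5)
Your proposal is correct and follows essentially the same route as the paper: correctness comes from Proposition~\ref{prop:psp} and Lemma~\ref{lem:costbp} (no negative cycles under $c_{\gamma_k}(t)$, so $\gamma(t)=\gamma_k(t)$ on $[t_{k-1},t_k]$), and the running time comes from charging $O(m)$ work to each of the $O(n^2W)$ if-block executions, and charging each else-block's key updates and Fibonacci-heap operations to the path changes bounded by Proposition~\ref{prop:treechange}. Your explicit check that $\beta_{k+1}>\beta_k$ (the key that closes $\mathcal{C}_{k+1}$ has positive denominator $b_{\beta_k}(\mathcal{C}_{k+1})$) is a welcome addition: the paper uses this monotonicity inside Proposition~\ref{prop:treechange} without proving it, and it also keeps your count of if-block executions valid even if consecutive breakpoints coincide.
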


\proof
By Proposition~\ref{prop:psp} and Lemma~\ref{lem:costbp}, $\mathcal{T}$ is the $r$-rooted shortest path tree throughout Algorithm~\ref{alg:main}.
For $k \in [N]$, we have $c_{\gamma_k}(t;\mathcal{C}_k) = 0$ for all $t \in \mathbb{R}$.
The iteration based on the parametric shortest path algorithm ensures that there are no negative-cost cycles with respect to $c_{\gamma_k}(t)$ for any $t \in [t_{k-1},t_k]$.
With respect to $c(t)$ and $c_{\gamma_k}(t)$, the cycle means of all cycles equally differ by $\gamma_k$. 
Hence, for all $t \in [t_{k-1},t_k]$, the minimum cycle mean with respect to $c(t)$ is $\gamma_k = \alpha_k - \beta_k t$, which is attained by $\mathcal{C}_k$.
This implies that Algorithm~\ref{alg:main} solves the parametric MCM correctly.

Next, we prove the complexity result.
The initialization in lines 1--6 can be done using Karp's method~\cite{Karp1978} and Bellman-Ford's method~\cite{Bellman1958}, which requires $O(mn)$ time in total.
By Proposition~\ref{prop:mmcchange}, the if-block starting with line 9 is executed at most $(|L_{\mathrm{cycle}}|-1)$ times.
Since $\sum_{v\in V} \deg^+(v) = m$, each iteration of lines 10--16 can be done in $O(m)$ time.
Hence, by~\eqref{eq:Lbound}, the if-block in lines 9--16 requires $O(mn^2W)$ time overall Algorithm~\ref{alg:main}.

Let us assume that the vertex keys $\kappa(v)$ for $v \in V$ are maintained by using the Fibonacci heap data structure.
By Proposition~\ref{prop:treechange}, the path $\mathcal{T}_v$ is updated at most $2(|L_v|-1)$ times for each $v \in V$.
Hence, the computation of the key values in line 24 can be done in $O(mn^2W)$ time in total because
we have $\sum_{v \in V} \deg^+(v) \cdot |L_v| \leq mn^2W$ using the inequality~\eqref{eq:Lvbound}.
The search for the minimum key value in line 7 is performed at most $|L_{\mathrm{cycle}}| + \sum_{v \in V} 2(|L_v|-1)$ times.
The key value increment in line 24 is executed at most $\sum_{v \in V} 2(|L_v|-1)$ times in total.
The amortized time for these operations is $O(n^3 W \log n)$.
The decrement of the key value in line 28 is executed at most $\deg^-(v) \cdot 2(|L_v|-1)$ times for each $v \in V$, and hence at most $2mn^2W$ times in total.
Hence, the amortized time for the key value decrement is $O(mn^2W)$.
Thus, the time complexity to maintain the key values by the Fibonacci heap data structure is $O((m+n\log n)n^2W)$ in total.
In summary, the overall time complexity of Algorithm~\ref{alg:main} is $O((m+n \log n)n^2W)$.
\endproof

\begin{corollary}
If $b(e) \in \{-1,0,1\}$ for all $e \in E$, then Algorithm~\ref{alg:main} solves the parametric MCM in $O((m+n \log n)n^2)$ time.
\end{corollary}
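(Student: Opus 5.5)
This corollary should follow by specializing Theorem~\ref{thm:main}, so the plan is simply to identify the value of $W$ in this setting and substitute it into the bound. If $b(e) \in \{-1,0,1\}$ for all $e \in E$, then $W = \max_{e\in E}|b(e)| \le 1$. There are two cases.

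If $W = 1$, Theorem~\ref{thm:main} applies verbatim and gives running time $O((m+n\log n)n^2\cdot 1) = O((m+n\log n)n^2)$.

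If $W = 0$, then $b \equiv 0$ and every cycle mean is independent of $t$, so the problem is just a non-parametric MCM. In that case $L_{\mathrm{cycle}}$ and each $L_v$ contain only pairs with second coordinate $0$. Hence $|L_{\mathrm{cycle}}| \le n$ and $|L_v| \le n$, and the counting in the proof of Theorem~\ref{thm:main} goes through with $n^2W$ replaced by $n$. Alternatively, the while-loop is never entered: all keys are $\infty$, because $b_{\beta}\equiv 0$ makes the key condition $b_*(e)+b_*(e^+) > b_*(v)$ fail. The algorithm then finishes after the $O(mn)$ initialization, which is within the stated bound.

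The only point to check is that the constants hidden in the proof of Theorem~\ref{thm:main} do not depend on $W$ in any other way. Inspecting the bounds \eqref{eq:Lbound} and \eqref{eq:Lvbound}, $|L_{\mathrm{cycle}}| \le (n^2+2n)W$ and $|L_v| \le n^2W$ become $O(n^2)$ when $W=1$. Every term in the complexity analysis (the $O(mn^2W)$ cost of the if-blocks, the key recomputations, and the Fibonacci heap operations totalling $O((m+n\log n)n^2W)$) is linear in $W$, so substituting $W=1$ gives the claim. I expect no real obstacle here; the only care needed is the degenerate case $W=0$, which the paper has already set aside as a non-parametric MCM solvable in $O(mn)$ time.
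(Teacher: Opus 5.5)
Your proposal is correct and matches the paper, which obtains this corollary by setting $W=1$ in Theorem~\ref{thm:main}. Your separate handling of $W=0$ is also valid. There all keys are $\infty$, so the algorithm stops after the $O(mn)$ initialization, which makes explicit the degenerate case the paper sets aside by assuming $W \geq 1$.
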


The following refined complexity is obtained within the proof of Theorem~\ref{thm:main} and would be useful when only a few edges are parametric. 

\begin{corollary}	\label{coro:complexity}
Algorithm~\ref{alg:main} solves the parametric MCM in $O(mn+K)$ time, where
\begin{align*}
	K = m \cdot |L_{\mathrm{cycle}}| + \sum_{v\in V} (\deg^+(v) + \deg^-(v) + \log n) \cdot |L_v|.
\end{align*}
\end{corollary}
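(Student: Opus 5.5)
We re-run the accounting in the proof of Theorem~\ref{thm:main}, but keep the per-vertex quantities $|L_v|$, $\deg^+(v)$, $\deg^-(v)$ and $|L_{\mathrm{cycle}}|$ as they are instead of replacing them with the uniform bounds \eqref{eq:Lbound} and \eqref{eq:Lvbound}. Correctness is already established there, so only the running time needs attention. The work splits into four parts: the initialization, the cost-modification branch (lines 9--16), the tree-update branch (lines 17--30), and the Fibonacci heap operations.

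\textbf{Initialization and cost modification.} The initialization in lines 1--6 uses Karp's method and Bellman--Ford. It costs $O(mn)$, which is the $mn$ term of the bound.

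By Proposition~\ref{prop:mmcchange}, the if-block at line 9 runs at most $|L_{\mathrm{cycle}}|-1$ times. Each run consists of one call to \textsc{TreePath}, costing $O(n)$, and calls to \textsc{UpdateKey} for all vertices. By Lemma~\ref{lem:updatekey} these calls cost $\sum_v O(\deg^+(v)) = O(m)$. Since $G$ is strongly connected, $n\le m$. The block therefore contributes $O(m|L_{\mathrm{cycle}}|)$, plus the heap work of rebuilding or re-keying $n$ keys. That heap work is $O(n)\le O(m)$ per run, with one extract-min per iteration.

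\textbf{Tree updates.} Every execution of the else-branch updates $\mathcal{T}_v$ for each $v$ in $\mathcal{T}_{\to v^*}$, and the work per such vertex is as follows:
\begin{itemize}
\item $O(1)$ for the $a_*,b_*$ shift;
\item $O(\deg^+(v))$ for \textsc{UpdateKey};
\item $O(\deg^-(v))$ for the loop over $\delta^-(v)$;
\item one key increase (amortized $O(\log n)$, implemented as delete plus reinsert) and $\deg^-(v)$ decrease-key operations (amortized $O(1)$ each).
\end{itemize}
So each update of $\mathcal{T}_v$ costs $O(\deg^+(v)+\deg^-(v)+\log n)$ amortized.

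By Proposition~\ref{prop:treechange}, $\mathcal{T}_v$ is updated at most $2(|L_v|-1)$ times. Summing over $v$ gives $O\bigl(\sum_v(\deg^+(v)+\deg^-(v)+\log n)|L_v|\bigr)$.

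\textbf{Extract-min operations.} Each while-iteration performs one extract-min in line 7, at cost $O(\log n)$. Each iteration is either a cycle event or updates at least $\mathcal{T}_{v^*}$. Hence the number of iterations is at most $|L_{\mathrm{cycle}}| + \sum_v 2|L_v|$. Their cost is $O(\log n\,|L_{\mathrm{cycle}}| + \log n\sum_v |L_v|)$. The first term is absorbed by $m|L_{\mathrm{cycle}}|$ only if $\log n = O(m)$, which holds since $m\ge n$. The second term is already included in $K$.

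Two further points need checking. First, the membership test ``$v^*$ on $\mathcal{T}_{(e^*)^+}$'' must not exceed these bounds. I would handle it by walking from $v^*$'s subtree, i.e.\ by marking $\mathcal{T}_{\to v^*}$. That costs a walk over vertices whose paths would change anyway, or $O(n)\le O(m)$ in the cycle case. Second, the edge swap is $O(1)$.

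The main obstacle is exactly this accounting. Every unit of work must be charged either to a cycle event (at most $O(m)$ each) or to an update of a specific $\mathcal{T}_v$, at $O(\deg^+(v)+\deg^-(v)+\log n)$ per update. Adding the $O(mn)$ initialization gives $O(mn+K)$.
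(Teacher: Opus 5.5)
Your proposal is correct and follows the paper's route: the paper obtains this corollary by redoing the accounting in the proof of Theorem~\ref{thm:main} without substituting the uniform bounds \eqref{eq:Lbound} and \eqref{eq:Lvbound}, charging $O(m)$ to each cycle event and $O(\deg^+(v)+\deg^-(v)+\log n)$ amortized to each update of $\mathcal{T}_v$, exactly as you do. Your extra remarks make explicit several points the paper leaves implicit: rebuilding the heap in $O(n)$ at each cycle event (re-keying through increase-key would not obviously stay within $O(m)$), testing membership by traversing $\mathcal{T}_{\to v^*}$, and absorbing the $|L_{\mathrm{cycle}}|\log n$ extract-min cost into $m|L_{\mathrm{cycle}}|$ via $m\ge n$.
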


\section{Parametric tropical eigenvectors}

The tropical semiring, which is also called the min-plus algebra, is the semiring of numbers $\mathbb{R}_{\min} := \mathbb{R} \cup \{\infty\}$ with addition $\oplus$ and multiplication $\otimes$ defined by
\begin{align*}
	a \oplus b := \min(a,b), \quad a \otimes b := a+b, \quad a,b \in \mathbb{R}_{\min}.
\end{align*}
The neutral element for the addition and the unit element for the multiplication are $\infty$ and $0$, respectively.
The tropical semiring satisfies the same axioms as a ring, except for the existence of additive inverses.
For $a \in \mathbb{R}_{\min}$ and $k \in \mathbb{Z}$, the tropical power is defined by $a^{\otimes k} := ka$.
For $m,n \in \mathbb{Z}_{>0}$, let $\mathbb{R}_{\min}^{m \times n}$ be the set of all $m$-by-$n$ matrices over $\mathbb{R}_{\min}$.
In particular, the set of $n$-dimensional column vectors is $\mathbb{R}_{\min}^n := \mathbb{R}_{\min}^{n \times 1}$.
Arithmetic operations on tropical matrices are defined analogously to those in conventional matrix algebra.
The sum of $A = (a_{ij}) \in \mathbb{R}_{\min}^{m \times n}$ and $B = (b_{ij}) \in \mathbb{R}_{\min}^{m \times n}$ is defined as
\begin{align*}
	A \oplus B := (a_{ij} \oplus b_{ij}) \in \mathbb{R}_{\min}^{m \times n}.
\end{align*}
The product of $A = (a_{ij}) \in \mathbb{R}_{\min}^{\ell \times m}$ and $B = (b_{ij}) \in \mathbb{R}_{\min}^{m \times n}$ is defined as
\begin{align*}
	A \otimes B := \left(\bigoplus_{k=1}^m a_{ik} \otimes b_{kj} \right) \in \mathbb{R}_{\min}^{\ell \times n}.
\end{align*}
The scalar multiplication of $A = (a_{ij}) \in \mathbb{R}_{\min}^{m \times n}$ and $\lambda \in \mathbb{R}_{\min}$ is defined as
\begin{align*}
	\lambda \otimes A := (\lambda \otimes a_{ij}) \in \mathbb{R}_{\min}^{m \times n}.
\end{align*}
The zero matrix (or vector) is the matrix (or vector) whose entries are all $\infty$.
The identity matrix, denoted by $I$, is the square matrix with $0$ on the diagonal and $\infty$ elsewhere.
For $A \in \mathbb{R}_{\min}^{n \times n}$ and $k \in \mathbb{Z}_{>0}$, the tropical power $A^{\otimes k}$ means the $k$-fold product of $A$.

The associated digraph of $A \in \mathbb{R}_{\min}^{n \times n}$ is the weighted digraph $\mathcal{G}(A) := (V,E,c)$, where $V = [n]$, $E = \{ (i,j) \mid a_{ij} \neq \infty\}$, and $c: E \to \mathbb{R}$ is the cost function defined by $c((i,j)) = a_{ij}$ for $(i,j) \in E$.
For $k \in \mathbb{Z}_{>0}$, we readily observe that the $(i,j)$ entry of $A^{\otimes k}$ is equal to the minimum cost of all $i$-$j$ walks with length $k$. 
If $\mathcal{G}(A)$ has no negative-cost cycle, then the $(i,j)$ entry of $A^* := I \oplus A \oplus A^{\otimes 2} \oplus \cdots \oplus A^{\otimes (n-1)}$ is equal to the cost of the shortest $i$-$j$ path.

For $A \in \mathbb{R}_{\min}^{n \times n}$, if there exist $\lambda \in \mathbb{R}_{\min}$ and $x \in \mathbb{R}_{\min}^n \setminus \{(\infty,\dots,\infty)^\top\}$ such that 
\begin{align*}
	A \otimes x = \lambda \otimes x, 
\end{align*}
then $\lambda$ is called an eigenvalue of $A$ and $x$ is called an eigenvector of $A$ with respect to $\lambda$.
The eigenproblem in the tropical semiring can be understood in terms of the MCM in $\mathcal{G}(A)$.

\begin{proposition}[\cite{Green1979}]		\label{prop:tropeig}
The minimum eigenvalue of $A \in \mathbb{R}_{\min}^{n \times n}$ is equal to the minimum cycle mean of $\mathcal{G}(A)$.
If $\mathcal{G}(A)$ is strongly connected, then the eigenvalue of $A$ is unique.
For the minimum eigenvalue $\lambda$, the $j$th column of $((-\lambda) \otimes A)^*$ is an eigenvector of $A$ with respect to $\lambda$ if and only if the vertex $j$ is on the minimum mean cycle.
\end{proposition}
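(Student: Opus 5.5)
The statement is the classical result of Green (Proposition~\ref{prop:tropeig}). I plan to prove it through the shortest-path interpretation of $A^*$ from the previous paragraph, in three steps: normalization, a lower bound on eigenvalues, then the eigenvector characterization.

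\textbf{Normalization.} Let $\lambda$ be the minimum cycle mean of $\mathcal{G}(A)$ and set $B := (-\lambda)\otimes A$. Every cycle mean drops by exactly $\lambda$, so $\mathcal{G}(B)$ has minimum cycle mean $0$. Consequently there are no negative-cost cycles, $B^*$ is well defined as the matrix of shortest-path costs, and $B^* = I \oplus B \otimes B^*$.

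\textbf{Lower bound on eigenvalues.} Suppose $A\otimes x = \mu\otimes x$ with $x \neq (\infty,\dots,\infty)^\top$, and pick $i_0$ with $x_{i_0}<\infty$. Each finite $x_i$ satisfies $\min_j (a_{ij}+x_j) = \mu + x_i$, so some $j$ attains the minimum with $x_j$ finite. Following these minimizing successors from $i_0$ must eventually revisit a vertex, which produces a cycle. Summing $a_{ij}+x_j-x_i=\mu$ around that cycle telescopes the $x$ terms and shows the cycle has mean exactly $\mu$. Hence $\mu \geq \lambda$, so $\lambda$ is a lower bound for every eigenvalue.

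\textbf{Existence and the "if" direction.} Let $j$ be on a zero-cost cycle of $\mathcal{G}(B)$ (equivalently, on a minimum mean cycle of $\mathcal{G}(A)$), and let $y$ be the $j$th column of $B^*$. From $B^*=I\oplus B\otimes B^*$ we get $y = e_j \oplus B\otimes y$, where $e_j$ is the $j$th unit vector. We need $(B\otimes y)_j \le 0$, which holds because $(B\otimes y)_j$ is the minimum cost of a nontrivial $j$-$j$ walk, and this is $0$ since a zero cycle through $j$ exists and there are no negative cycles. Then $B\otimes y = y$, i.e.\ $A\otimes y=\lambda\otimes y$. This gives both the existence of the eigenvalue $\lambda$ and the "if" part.

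\textbf{The "only if" direction.} If $j$ is not on a zero cycle, every nontrivial closed walk at $j$ has positive cost, so $(B\otimes y)_j>0=y_j$ and the column fails to be an eigenvector.

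\textbf{Uniqueness under strong connectivity.} Let $A\otimes x=\mu\otimes x$. Strong connectivity propagates finiteness: if $x_j$ is finite and $a_{ij}$ is finite, then $\mu+x_i\le a_{ij}+x_j<\infty$, so $x$ has all entries finite. Now for every edge $(i,j)$ we have $a_{ij}+x_j-x_i\ge\mu$. Summing over the edges of a minimum mean cycle telescopes the $x$ terms and gives $\lambda\ge\mu$. Combined with the lower bound $\mu\ge\lambda$, this yields $\mu=\lambda$.

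The main obstacle I expect is the bookkeeping with infinite entries: making sure the successor chain in the lower-bound step stays on finite coordinates, and making sure finiteness propagates correctly in the uniqueness step.
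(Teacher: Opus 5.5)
Your argument is correct. The paper gives no proof of its own here, only the citation to Cuninghame-Green, and your proof matches the standard textbook argument behind that citation: saturation cycles give $\mu\ge\lambda$, the critical columns of the normalized Kleene star give existence and the column characterization, and finiteness of the eigenvector plus telescoping around a minimum mean cycle give uniqueness.

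The only bookkeeping to make explicit is the case $\mu=\infty$, which is allowed in $\mathbb{R}_{\min}$. In your lower-bound step it satisfies $\mu\ge\lambda$ trivially, and your successor argument there really does need $\mu$ finite. In the strongly connected case, $\mu=\infty$ is ruled out by your own inequality $\mu+x_i\le a_{ij}+x_j<\infty$, applied to an edge entering a vertex $j$ with $x_j$ finite.
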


\subsection{Parametric algorithm}

Let us consider a linear parametric matrix $C(t) = (c_{ij}(t)) \in \mathbb{R}_{\min}^{n \times n}$, where $c_{ij}(t) = a_{ij} - b_{ij}t$ with $a_{ij},b_{ij} \in \mathbb{Z}$ or $c_{ij}(t) = \infty$.
Let $m = |\{ (i,j) \mid c_{ij}(t) \neq \infty \}|$ and $W = \max_{b_{ij} \neq \infty} |b_{ij}|$.
The following result is straightforward from Theorem~\ref{thm:main}.

\begin{corollary}
The eigenvalue of $C(t)$ can be computed as a piecewise-linear function of $t$ in $O((m+n \log n)n^2W)$ time.
\end{corollary}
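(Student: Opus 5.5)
The plan is to reduce the statement to Theorem~\ref{thm:main} through Proposition~\ref{prop:tropeig}, applied pointwise in $t$. First I would form the parametric associated digraph $\mathcal{G} = (V,E)$ with $V=[n]$ and $E = \{(i,j) \mid c_{ij}(t) \neq \infty\}$. Here finiteness of an entry does not depend on $t$, because each entry is either the affine function $a_{ij}-b_{ij}t$ or identically $\infty$. On this digraph I would set $a((i,j)) := a_{ij}$ and $b((i,j)) := b_{ij}$, so that $\mathcal{G}(C(t))$ is exactly $\mathcal{G}$ with cost function $c(t)=a-bt$ for every $t\in\mathbb{R}$. Then $|E| = m$, $\max_{e\in E}|b(e)| = W$, and the parameters match those of Theorem~\ref{thm:main}.

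Next, for each fixed $t$, Proposition~\ref{prop:tropeig} says the minimum eigenvalue of $C(t)$ is the minimum cycle mean $\gamma(t)$ of $\mathcal{G}(C(t))$. When $\mathcal{G}$ is strongly connected, this is the unique eigenvalue. Algorithm~\ref{alg:main} returns breakpoints $t_0<\dots<t_N$ and linear forms $\gamma_k=\alpha_k-\beta_k t$ with $\gamma(t)=\gamma_k(t)$ on $[t_{k-1},t_k]$. This is exactly a representation of the eigenvalue as a piecewise-linear function of $t$. By Proposition~\ref{prop:mmcchange}, $\gamma(t)$ is the minimum of finitely many linear forms, so it is indeed piecewise linear and concave.

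For the running time, I would cite Theorem~\ref{thm:main} directly, which gives $O((m+n\log n)n^2W)$. Building $\mathcal{G}$ from $C(t)$ costs only $O(n^2)$ if the matrix is given densely, or $O(m)$ if it is given sparsely. In the dense case, $O(n^2)\subseteq O(n^3\log n\,W)$ because $W\ge1$, so the bound is unaffected. In the degenerate case $W=0$ the problem is non-parametric and Karp's formula already suffices.

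The only real subtlety is what ``the eigenvalue'' means when $\mathcal{G}$ is not strongly connected. To be safe, I would state the result for the minimum eigenvalue. Section~3 already reduces to strongly connected components, and Algorithm~\ref{alg:main} run on each component $V_i$ with $n_i$ vertices and $m_i$ edges costs $O((m_i+n_i\log n_i)n_i^2W)$. Summing over components stays within $O((m+n\log n)n^2W)$. The overall minimum of the resulting piecewise-linear functions, which is their lower envelope, is again piecewise linear and is what Proposition~\ref{prop:tropeig} identifies as the minimum eigenvalue. That envelope can be merged within the same bound, since each component function has $O(n_i^2W)$ pieces.
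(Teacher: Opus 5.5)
Your proposal is correct and takes the same route as the paper. The paper simply calls the corollary straightforward from Theorem~\ref{thm:main}, once Proposition~\ref{prop:tropeig} identifies the (minimum) eigenvalue of $C(t)$ with the minimum cycle mean of the fixed digraph carrying costs $a_{ij}-b_{ij}t$. Your extra handling of the non-strongly-connected case (per-component runs plus a lower-envelope merge) spells out the component reduction the paper invokes at the start of Section~3. The merge does fit in the bound: the $O(n^2W)$ slope-sorted pieces can be merged in $O(n^2W\log n)$ time.
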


To compute the corresponding eigenvectors using Algorithm~\ref{alg:main}, we need to pay additional attention to the choice of the root $r$ of the shortest path tree.
Let $t_k, \gamma_k$, and $\mathcal{C}_k$ be obtained during Algorithm~\ref{alg:main} for $\mathcal{G}(C(t))$.
As mentioned in Proposition~\ref{prop:tropeig}, the eigenvector of $C(t)$ for $t \in [t_{k-1},t_k]$ corresponding to the eigenvalue $\gamma_k = \alpha_k - \beta_k t$ is given as the $j$th column of $((-\gamma_k) \otimes C(t))^*$ if $j$ is on $\mathcal{C}_k$.
We observe that the matrix $(-\gamma_k) \otimes C(t)$ represents the cost function $c_{\gamma_k}(t)$.
Hence, the $(i,j)$ entry of $((-\gamma_k) \otimes C(t))^*$ is equal to the cost of the shortest $i$-$j$ path with respect to $c_{\gamma_k}(t)$.
Thus, to obtain the $j$th column vector, it is sufficient to have the $j$-rooted shortest path tree with respect to $c_{\gamma_k}(t)$.

With this in mind, we dynamically change the root of the shortest path tree instead of fixing the root to $r$.
We need to change the root when the minimum mean cycle is updated, that is, when the if-block in lines 9--16 is executed.
We choose any vertex $j$ on the new minimum mean cycle $\mathcal{C}_{k+1}$ and reconstruct the $j$-rooted shortest path tree with respect to $c_{\gamma_{k+1}}(t_k)$. 
Since we have already obtained the shortest path tree $\mathcal{T}$ at another root, the single-sink shortest path problem to $j$ can be solved by Dijkstra's algorithm~\cite{Dijkstra1959} using the nonnegative reduced costs induced by $\mathcal{T}$.
Hence, the time complexity for the reconstruction of the shortest path tree is $O(m + n\log n)$.

Since the column vectors of $((-\gamma_k) \otimes C(t))^*$ depend on the shortest path tree, we need to retain all values $\min_{v \in V} \kappa(v)$ as breakpoints even when the else-block in lines 17--28 is executed.
In addition, the changes in the path $\mathcal{T}_v$ are counted separately for each root $j \in V$.
Hence, the total complexity of the else-block is multiplied by $n$ in the worst case.
In summary, we have the following result.

\begin{theorem}
The eigenvector of $C(t)$ can be computed as the vector whose components are piecewise-linear functions of $t$ in $O((m+n\log n)n^3W)$ time.
\end{theorem}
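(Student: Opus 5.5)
The plan is to run the modified Algorithm~\ref{alg:main} described above and bound each component's cost separately: correctness, the number of root changes, the per-root tree updates, and the extra output size.

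\textbf{Correctness.} By Proposition~\ref{prop:tropeig}, on $[t_{k-1},t_k]$ the $j$th column of $((-\gamma_k)\otimes C(t))^*$, with $j$ on $\mathcal{C}_k$, is an eigenvector. Its $i$th entry is $c_{\gamma_k}(t;\mathcal{T}_i)$, where $\mathcal{T}$ is the $j$-rooted shortest path tree for $c_*(t)=c_{\gamma_k}(t)$. Between consecutive events (root changes and tree updates), $\mathcal{T}$ is fixed by Proposition~\ref{prop:psp}. Each entry is then the linear form $a_{\alpha_k}(\mathcal{T}_i)-b_{\beta_k}(\mathcal{T}_i)t$, whose coefficients are exactly the values $a_*(i),b_*(i)$ maintained by the algorithm. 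So recording $(a_*,b_*)$ at every event yields the eigenvector as a piecewise-linear vector.

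Continuity across a root change at $t_k$ is not required for correctness, since eigenvectors are only defined up to scaling. Still, Lemma~\ref{lem:costbp} guarantees that the costs agree at $t_k$, so the new tree obtained by Dijkstra is a valid shortest path tree for $c_{\gamma_{k+1}}(t_k)$. The invariant that $\mathcal{T}$ is a shortest path tree with respect to $c_*(t)$ therefore persists, exactly as in Theorem~\ref{thm:main}.

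\textbf{Complexity.}
\begin{enumerate}[label=(\roman*)]
\item \emph{Root changes.} The number of if-block executions is at most $|L_{\mathrm{cycle}}|=O(n^2W)$ by Proposition~\ref{prop:mmcchange}. Each execution costs $O(m)$ for \textsc{TreePath} and \textsc{UpdateKey}, plus $O(m+n\log n)$ for the Dijkstra rerooting using reduced costs from the old tree. This gives $O((m+n\log n)n^2W)$ in total.
\item \emph{Tree updates.} For a fixed root $j$, define $L_v^{(j)}$ analogously to $L_v$ using $v$-$j$ paths, so $|L_v^{(j)}|\le n^2W$. Restricting attention to the time periods during which the root is $j$, I apply the swap argument of Proposition~\ref{prop:treechange}. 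The letters are the $v$-$j$ paths $\mathcal{P}_i$, ordered by $c_*(t;\mathcal{P}_i)$, where $c_*$ is globally defined and independent of the root. The ordering procedure runs over all $t$, not only the periods with root $j$, and the Claim about at most two swaps per pair still holds because $\beta_k$ is monotone. Each update of $\mathcal{T}_v$ while the root is $j$ then corresponds to a change of the first letter, so the number of such updates is at most $2(|L_v^{(j)}|-1)$. Summing over $j$, the updates of $\mathcal{T}_v$ number at most $O(n\cdot n^2W)$. Repeating the Fibonacci heap accounting of Theorem~\ref{thm:main} with this extra factor $n$ gives $O((m+n\log n)n^3W)$ for the else-blocks.
\item \emph{Output.} Each event changes the entries only along $\mathcal{T}_{\to v^*}$. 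Recording these changes costs $O(n)$ per event, which is dominated by the update cost already counted.
\end{enumerate}

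\textbf{Main obstacle.} The key step is (ii): justifying that Proposition~\ref{prop:treechange} still applies per root when the root is switched intermittently. A newly built tree after rerooting may differ arbitrarily from the tree previously held for that root, and this must not invalidate the counting. I would handle this by noting that the first letter of the swap sequence for root $j$ is a property of $c_*$ alone. The path $\mathcal{T}_v$ in any $j$-rooted shortest path tree is always some path among those tied for the first position, using the tie-breaking by right-derivatives as in step 4. Hence a rebuild does not count as an update, and genuine updates still inject into first-letter changes. If this injection fails at ties created by rebuilding, I would fall back on charging each rebuild $O(n)$ extra changes. This adds only $O(n^3W)$ per vertex in total and remains within the bound.
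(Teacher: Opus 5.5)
Your proposal follows the paper's argument. You reroot at each if-block execution by running Dijkstra on the reduced costs of the current tree ($O(m+n\log n)$ each time, at most $|L_{\mathrm{cycle}}|$ times), keep the else-block events as breakpoints, and count the changes of $\mathcal{T}_v$ separately for each root via Proposition~\ref{prop:treechange}, which multiplies the else-block cost by $n$.

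Your handling of ties after a rebuild, and of recording only the changed entries, fills in details the paper leaves implicit. The $O(n)$-per-rebuild fallback is not justified as stated, but it is also not needed: each pivot after a rebuild strictly decreases the right-derivative at $t_k$, so it is already covered by the first-letter changes in step~4 of the swap procedure.
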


\section{Comparison to the symbolic Karp's method}

To solve the parametric MCM, we may use Karp's method~\eqref{eq:mcmkarp} by regarding the parameter $t$ as a symbol representing an indeterminate.
For $\ell \in \mathbb{Z}_{\geq 0}$ and $v, r \in V$, recall that $F_v(\ell)$ is the minimum cost of all $v$-$r$ walks with length $\ell$.
In terms of the tropical semiring, $F_v(\ell)$ is a tropical Laurent polynomial of the form
\begin{align*}
	F_v(\ell) = \bigoplus_{i=-\ell W}^{\ell W} q_i^{(v,\ell)} \otimes t^{\otimes (-i)}, \quad q_i^{(v,\ell)} \in \mathbb{R}_{\min}.
\end{align*}
Note that $q_i^{(v,\ell)}$ is the minimum value of $a(\mathcal{P})$ over all $v$-$r$ walks $\mathcal{P}$ with length $\ell$ such that $b(\mathcal{P}) = i$ if one exists; otherwise $q_i^{(v,\ell)} = \infty$.
By setting $F_v(0) = 0$ for all $v \in V$, we inductively compute $F_v(\ell)$ by
\begin{align*}
	F_v(\ell+1) = \min_{e \in \delta^+(v)} (a(e) - b(e)t + F_{e^+}(\ell)) 
	=\bigoplus_{e \in \delta^+(v)} \bigoplus_{i=-\ell W}^{\ell W} a(e) \otimes q_i^{(e^+,\ell)} \otimes t^{\otimes (-i - b(e))}
\end{align*}
for $v \in V$ and $\ell = 0,1,\dots,n-1$.
The computation of $F_v(\ell+1)$ for all $v \in V$ from $F_1(\ell),F_2(\ell),\dots,F_n(\ell)$ requires $m(2\ell W+1)$ tropical multiplications $a(e) \otimes q_i^{(e^+,\ell)}$.
Hence, we can obtain $F_v(\ell)$ for all $v \in V$ and $\ell \in [n]$ in $O(mn^2W)$ time.
Even though the terms with $q_i^{(v,\ell)} = \infty$ are ignored, the computation requires $\sum_{v \in V} \deg^-(v) \cdot |L'_v|$ times of tropical multiplications, where $L'_v = \{ (|\mathcal{P}|, b(\mathcal{P})) \mid \mathcal{P} \text{ is a $v$-$r$ walk, $|\mathcal{P}| \leq n-1$} \}$.
This time complexity of the symbolic Karp's method is not significantly better than the result in Theorem~\ref{thm:main} or Corollary~\ref{coro:complexity}; the improvement is limited to the case where $m$ is much smaller than $n\log n$.

In addition, in the computation of the minimum mean cycle by~\eqref{eq:mcmkarp}, we need all terms of $F_v(\ell)$ for all $v \in V$ and $\ell \in [n]$.
This means that the symbolic Karp's method requires $O(\sum_{v \in V} \deg^-(v) \cdot |L'_v|)$ time of computation for any instance.
On the other hand, the time complexity in Corollary~\ref{coro:complexity} based on the parametric shortest path algorithm is a worst-case bound.
As noted in Section 2.2, the parametric shortest path algorithm for the non-parametric MCM is faster than Karp's method in practice.
For an instance of the parametric MCM to which a similar observation applies, the expected complexity of Algorithm~\ref{alg:main} will be much smaller, outperforming the symbolic Karp's method.

Another advantage of Algorithm~\ref{alg:main} is that it is easily modified to the tropical eigenproblem as described in Section 4.
Since the computation of tropical eigenvectors is reduced to the shortest path problem with respect to the modified cost function $c_*(t)$, the dynamic modification to the cost function in Algorithm~\ref{alg:main} is suitable for this.

\section{Concluding remarks}

In this paper, we proposed an algorithm to solve the parametric MCM.
Although the time complexity of the algorithm is $O((m+n\log n)n^2W)$ in general, it is much faster when the numbers of breakpoints for the minimum mean cycles and/or shortest path trees are small.
This is considered to be an advantage over the symbolic Karp's method.
Our algorithm can be translated into the setting of the tropical spectral theory.
By using shortest path trees, we can compute the tropical eigenvectors of a parametric matrix as a byproduct of computing the eigenvalues.
Although at present, the worst-case time complexity to compute tropical eigenvectors of a parametric matrix is $O((m+n\log n)n^3W)$, this bound may be improved if partial optimality of shortest paths can be effectively exploited.
This remains a topic for future work.

\section*{Acknowledgments}
This work is supported by JSPS KAKENHI Grant No.~26K06897.

\bibliographystyle{abbrv}
\bibliography{paraMCMref_arxiv}

@article{Banerjee2003,
  author       = {Ipsita Banerjee and Marianthi G. Ierapetritou},
  title        = {Parametric process synthesis for general nonlinear models},
  journal      = {Computers and Chemical Engineering },
  volume       = {27},
  number       = {10},
  pages        = {1499--1512},
  year         = {2003},
  doi          = {10.1016/S0098-1354(03)00096-6}
}

@article{Bellman1958,
 author       = {Richard Bellman},
 journal      = {Quarterly of Applied Mathematics},
 number       = {1},
 pages        = {87--90},
 publisher    = {Brown University},
 title        = {ON A ROUTING PROBLEM},
 volume       = {16},
 year         = {1958}
}

@article{Cechlarova2005,
  author       = {Katar{\'{\i}}na Cechl{\'{a}}rov{\'{a}}},
  title        = {Eigenvectors of interval matrices over max-plus algebra},
  journal      = {Discrete Applied Mathematics},
  volume       = {150},
  number       = {1-3},
  pages        = {2--15},
  year         = {2005},
  doi          = {10.1016/J.DAM.2005.02.016}
}

@Article{Cohen1985,
  author={Guy Cohen and Didier Dubois and Jean{-}Pierre Quadrat and Michel Viot},
  journal={IEEE Transactions on Automatic Control}, 
  title={A linear-system-theoretic view of discrete-event processes and its use for performance evaluation in manufacturing}, 
  year={1985},
  volume={30},
  number={3},
  pages={210--220},
  doi={10.1109/TAC.1985.1103925},
}

@article{Dasdan2004,
  author       = {Ali Dasdan},
  title        = {Experimental analysis of the fastest optimum cycle ratio and mean algorithms},
  journal      = {{ACM} Transactions on Design Automation of Electronic Systems},
  volume       = {9},
  number       = {4},
  pages        = {385--418},
  year         = {2004},
  doi          = {10.1145/1027084.1027085}
}

@article{Dijkstra1959,
  author       = {Edsger W. Dijkstra},
  title        = {A note on two problems in connexion with graphs},
  journal      = {Numerische Mathematik},
  volume       = {1},
  pages        = {269--271},
  year         = {1959},
  doi          = {10.1007/BF01386390}
}

@article{Eisner1976,
  author       = {Mark J. Eisner and
                  Dennis G. Severance},
  title        = {Mathematical Techniques for Efficient Record Segmentation in Large Shared Databases},
  journal      = {Journal of the {ACM}},
  volume       = {23},
  number       = {4},
  pages        = {619--635},
  year         = {1976},
  doi          = {10.1145/321978.321982}
}

@article{Fredman1987,
  author       = {Michael L. Fredman and
                  Robert Endre Tarjan},
  title        = {Fibonacci heaps and their uses in improved network optimization algorithms},
  journal      = {Journal of the {ACM}},
  volume       = {34},
  number       = {3},
  pages        = {596--615},
  year         = {1987},
  doi          = {10.1145/28869.28874},
}

@article{Gaubert1998,
  title        = {The duality theorem for min-max functions},
  journal      = {Comptes Rendus de l'Acad\'{e}mie des Sciences - Series I - Mathematics},
  volume       = {326},
  number       = {1},
  pages        = {43--48},
  year         = {1998},
  doi          = {10.1016/S0764-4442(97)82710-3},
  author       = {St\'{e}phane Gaubert and Jeremy Gunawardena} 
}

@article{Gassner2010,
  author       = {Elisabeth Gassner and Bettina Klinz},
  title        = {A fast parametric assignment algorithm with applications in max-algebra},
  journal      = {Networks},
  volume       = {55},
  number       = {2},
  pages        = {61--77},
  year         = {2010},
  doi          = {10.1002/NET.20288}
}

@article{Gaubert2012,
  author       = {St{\'{e}}phane Gaubert and
                  Ricardo Katz and
                  Serge\u{\i} Sergeev},
  title        = {Tropical linear-fractional programming and parametric mean payoff
                  games},
  journal      = {Journal of Symbolic Computation},
  volume       = {47},
  number       = {12},
  pages        = {1447--1478},
  year         = {2012},
  doi          = {10.1016/J.JSC.2011.12.049}
}

@article{Gaubert2013,
  author       = {St{\'{e}}phane Gaubert and
                  Serge\u{\i} Sergeev},
  title        = {The level set method for the two-sided max-plus eigenproblem},
  journal      = {Discrete Event Dynamic Systems},
  volume       = {23},
  number       = {2},
  pages        = {105--134},
  year         = {2013},
  doi          = {10.1007/S10626-012-0137-Z}
}

@article{Gusfield1983,
  author       = {Dan Gusfield},
  title        = {Parametric Combinatorial Computing and a Problem of Program Module Distribution},
  journal      = {Journal of the {ACM}},
  volume       = {30},
  number       = {3},
  pages        = {551--563},
  year         = {1983},
  doi          = {10.1145/2402.322391}
}

@article{Karp1978,
  author       = {Richard M. Karp},
  title        = {A characterization of the minimum cycle mean in a digraph},
  journal      = {Discrete Mathematics},
  volume       = {23},
  number       = {3},
  pages        = {309--311},
  year         = {1978},
  doi          = {10.1016/0012-365X(78)90011-0}
}

@article{Karp1981,
  author       = {Richard M. Karp and
                  James B. Orlin},
  title        = {Parametric shortest path algorithms with an application to cyclic
                  staffing},
  journal      = {Discrete Applied Mathematics},
  volume       = {3},
  number       = {1},
  pages        = {37--45},
  year         = {1981},
  doi          = {10.1016/0166-218X(81)90026-3},
}

@article{Lee1987,
  author       = {Edward A. Lee and David G. Messerschmitt},
  title        = {Synchronous data flow},
  journal      = {Proceedings of the {IEEE}},
  volume       = {75},
  number       = {9},
  pages        = {1235--1245},
  year         = {1987},
  doi          = {10.1109/PROC.1987.13876}
}

@article{Megiddo1979,
  author       = {Nimrod Megiddo},
  title        = {Combinatorial Optimization with Rational Objective Functions},
  journal      = {Mathematics of Operations Research},
  volume       = {4},
  number       = {4},
  pages        = {414--424},
  year         = {1979},
  doi          = {10.1287/MOOR.4.4.414}
}

@article{Nemesch2025,
  author       = {Levin Nemesch and Stefan Ruzika and Clemens Thielen and Alina Wittmann},
  title        = {A survey of exact and approximation algorithms for linear-parametric
                  optimization problems},
  journal      = {Journal of Global Optimization},
  volume       = {93},
  number       = {1},
  pages        = {299--333},
  year         = {2025},
  doi          = {10.1007/S10898-025-01512-6}
}

@article{Nishida2026,
  author       = {Yuki Nishida and
                  Sennosuke Watanabe and
                  Yoshihide Watanabe},
  title        = {Characterization and algorithm for max-plus supereigenvector problem
                  by parametric programming},
  journal      = {Discrete Applied Mathematics},
  volume       = {379},
  pages        = {872--894},
  year         = {2026},
  doi          = {10.1016/J.DAM.2025.10.059}
}

@article{Parsons2023,
  author = {Jamie Parsons and Serge\u{\i} Sergeev and Huili Wang},
  title = {Tropical pseudolinear and pseudoquadratic optimization as parametric mean-payoff games},
  journal = {Optimization},
  volume = {72},
  number = {11},
  pages     =	 {2793--2822},
  year = {2023},
  doi = {10.1080/02331934.2022.2085100}
}

@article{Pistikopoulos2012,
  author       = {Efstratios N. Pistikopoulos and Luis F. Dom{\'{\i}}nguez and Christos Panos and Konstantinos I. Kouramas and Altannar Chinchuluun},
  title        = {Theoretical and algorithmic advances in multi-parametric programming and control},
  journal      = {Computational Management Science},
  volume       = {9},
  number       = {2},
  pages        = {183--203},
  year         = {2012},
  doi          = {10.1007/S10287-012-0144-4}
}

@article{Plavka2005,
  author       = {J{\'{a}}n Pl{\'{a}}vka},
  title        = {\emph{l}-Parametric eigenproblem in max-algebra},
  journal      = {Discrete Applied Mathematics},
  volume       = {150},
  number       = {1-3},
  pages        = {16--28},
  year         = {2005},
  doi          = {10.1016/J.DAM.2005.02.017}
}

@article{Raith2017,
  author       = {Andrea Raith and Antonio Sede{\~{n}}o{-}Noda},
  title        = {Finding extreme supported solutions of biobjective network flow problems: An enhanced parametric programming approach},
  journal      = {Computers \& Operations Research},
  volume       = {82},
  pages        = {153--166},
  year         = {2017},
  doi          = {10.1016/J.COR.2017.01.004}
}

@article{Skelin2017,
  author       = {Mladen Skelin and
                  Marc Geilen and
                  Francky Catthoor and
                  Sverre Hendseth},
  title        = {Worst-case performance analysis of {SDF}-based parameterized dataflow},
  journal      = {Microprocess. Microsystems},
  volume       = {52},
  pages        = {439--460},
  year         = {2017},
  doi          = {10.1016/J.MICPRO.2016.12.004}
}

@article{Yin2022,
  author       = {Yingxuan Yin and
                  Yuegang Tao and
                  Cailu Wang and
                  Haiyong Chen},
  title        = {Local and global robustness with $q$-step delay for max-plus linear systems},
  journal      = {Discrete Event Dynamic Systems},
  volume       = {32},
  number       = {2},
  pages        = {231--251},
  year         = {2022},
  doi          = {10.1007/S10626-021-00352-2}
}

@article{Young1991,
  author       = {Neal E. Young and
                  Robert Endre Tarjan and
                  James B. Orlin},
  title        = {Faster parametric shortest path and minimum-balance algorithms},
  journal      = {Networks},
  volume       = {21},
  number       = {2},
  pages        = {205--221},
  year         = {1991},
  doi          = {10.1002/NET.3230210206}
}

@book{Baccelli1992,
  author		= {Fran\c{c}ois Baccelli and Guy Cohen and Geert Jan Olsder and Jean{-}Pierre Quadrat},
  title			= {Synchronization and Linearity},
  address		= {Chichester},
  publisher		= {Wiley},
  year			= {1992}
}

@book{Green1979,
  author		= {Raymond A. Cuninghame{-}Green},
  title			= {Minimax Algebra},
  publisher		= {Springer-Verlag},
  address		= {Berlin, Heidelberg},
  year			= {1979},
  isbn      = {978-3-540-09113-4},
  doi       = {10.1007/978-3-642-48708-8}
}

@book{Ehrgott2005,
  author       = {Matthias Ehrgott},
  title        = {Multicriteria Optimization {(2.} ed.)},
  publisher    = {Springer},
  address		   = {Berlin, Heidelberg},
  year         = {2005},
  doi          = {10.1007/3-540-27659-9},
  isbn         = {978-3-540-21398-7}
}

@book{Gal1994,
  author		= {Tomas Gal},
  title			= {Postoptimal Analyses, Parametric Programming, and Related Topics: Degeneracy, Multicriteria Decision Making, Redundancy {(2.} ed.)},
  address		= {Berlin, New York},
  publisher		= {De Gruyter},
  year			= {1994},
  doi       = {10.1515/9783110871203}
}

@book{Heidergott2005,
  author		= {Bernd Heidergott and Geert Jan Olsder and Jacob van der Woude},
  title			= {Max Plus at Work: Modeling and Analysis of Synchronized Systems: A Course on Max-plus Algebra and Its Applications},
  address		= {Princeton},
  publisher		= {Princeton University Press},
  year			= {2005},
  isbn      = {978-0691117638}
}

@inproceedings{Bhattacharya2000,
  author       = {Bishnupriya Bhattacharya and
                  Shuvra S. Bhattacharyya},
  title        = {Parameterized dataflow modeling of {DSP} systems},
  booktitle    = {{IEEE} International Conference on Acoustics, Speech, and Signal Processing.
                  {ICASSP} 2000, Istanbul, Turkey, 5-9 June, 2000},
  pages        = {3362--3365},
  publisher    = {{IEEE}},
  year         = {2000},
  doi          = {10.1109/ICASSP.2000.860121}
}

@inproceedings{Ghamarian2008,
  author       = {Amir Hossein Ghamarian and
                  Marc Geilen and
                  Twan Basten and
                  Sander Stuijk},
  editor       = {Donatella Sciuto},
  title        = {Parametric Throughput Analysis of Synchronous Data Flow Graphs},
  booktitle    = {Design, Automation and Test in Europe, {DATE} 2008, Munich, Germany,
                  March 10-14, 2008},
  pages        = {116--121},
  publisher    = {{ACM}},
  year         = {2008},
  doi          = {10.1109/DATE.2008.4484672}
}

@inproceedings{Stoian2026,
  author       = {Mihail Stoian},
  editor       = {Meena Mahajan and
                  Florin Manea and
                  Annabelle McIver and
                  Kim Thang Nguyen},
  title        = {Mind the Gap. {D}oubling Constant Parametrization of Weighted Problems: {TSP}, Max-Cut, and More},
  booktitle    = {43rd International Symposium on Theoretical Aspects of Computer Science,
                  {STACS} 2026, Grenoble, France, March 9-13, 2026},
  series       = {LIPIcs},
  volume       = {364},
  pages        = {79:1--79:19},
  publisher    = {Schloss Dagstuhl - Leibniz-Zentrum f{\"{u}}r Informatik},
  year         = {2026},
  doi          = {10.4230/LIPICS.STACS.2026.79}
}

\end{document}